\numberwithin{equation}{section}
\newtheorem{Theorem}{Theorem}[section]
\newtheorem{Lemma}[Theorem]{Lemma}
\begin{document}

\newcommand{\arXivNumber}{2102.13570}

\renewcommand{\thefootnote}{}

\renewcommand{\PaperNumber}{063}

\FirstPageHeading

\ShortArticleName{Completeness of SoV Representation for $\mathrm{SL}(2,\mathbb R)$ Spin Chains}

\ArticleName{Completeness of SoV Representation\\
for $\boldsymbol{\mathrm{SL}(2,\mathbb R)}$ Spin Chains\footnote{This paper is a~contribution to the Special Issue on Mathematics of Integrable Systems: Classical and Quantum in honor of Leon Takhtajan.

The full collection is available at \href{https://www.emis.de/journals/SIGMA/Takhtajan.html}{https://www.emis.de/journals/SIGMA/Takhtajan.html}}}

\Author{Sergey \'E.~DERKACHOV~$^{\rm a}$, Karol K.~KOZLOWSKI~$^{\rm b}$ and Alexander N.~MANASHOV~$^{\rm ca}$}

\AuthorNameForHeading{S.{\'E}.~Derkachov, K.K.~Kozlowski and A.N.~Manashov}

\Address{$^{\rm a)}$~St.~Petersburg Department of Steklov Mathematical Institute of Russian Academy of Sciences,\\
\hphantom{$^{\rm a)}$}~Fontanka 27, 191023 St.~Petersburg, Russia}
\EmailD{\href{mailto:derkach@pdmi.ras.ru}{derkach@pdmi.ras.ru}}

\Address{$^{\rm b)}$ Univ Lyon, ENS de Lyon, Univ Claude Bernard Lyon 1, CNRS, Laboratoire de Physique,\\
\hphantom{$^{\rm b)}$}~F-69342 Lyon, France}
\EmailD{\href{mailto:email@address}{karol.kozlowski@ens-lyon.fr}}

\Address{$^{\rm c)}$~Institut f\"ur Theoretische Physik, Universit\"at Hamburg, D-22761 Hamburg, Germany}
\EmailD{\href{mailto:alexander.manashov@desy.de}{alexander.manashov@desy.de}}

\ArticleDates{Received March 08, 2021, in final form June 14, 2021; Published online June 25, 2021}

\Abstract{This work develops a new method, based on the use of Gustafson's integrals and on the evaluation of singular integrals, allowing one to establish the unitarity of the separation of variables transform for infinite-dimensional representations of rank one quantum integrable models. We examine in detail the case of the $\mathrm{SL}(2,\mathbb R)$ spin chains.}

\Keywords{spin chains; separation of variables; Gustafson's integrals}

\Classification{33C70; 81R12}

\begin{flushright}
\begin{minipage}{85mm}
\it Dedicated to Professor Leon Armenovich Takhtajan\\
on the occasion of his 70th birthday
\end{minipage}
\end{flushright}

\renewcommand{\thefootnote}{\arabic{footnote}}
\setcounter{footnote}{0}

\section{Introduction}

The field of quantum integrable models takes its roots in the seminal work of Hans Bethe~\cite{BetheSolutionToXXX} on~the XXX Heisenberg
chain who developed the so-called coordinate Bethe ansatz method allowing one to construct the eigenvectors and eigenvalues of the
mentioned Hamilton operator by~means of combinatorial expressions involving auxiliary parameters. In order to obtain an~eigen\-vector, one
needs to impose certain constraints on these parameters, the so-called Bethe ansatz equations. Over the years, the method was refined and
applied to numerous other models, such as the XXZ Heisenberg chain~\cite{OrbachXXZCBASolution}, the $\delta$-function Bose gas
\cite{LiebLinigerCBAForDeltaBoseGas}, or the Hubbard model~\cite{LiebWuFirstDerivationOfLIE4Hubbard1D}, so as to name a few. In the late
70s, the method was raised to a higher level of~effec\-tiveness by Faddeev, Sklyanin, Takhtadjan~\cite{Faddeev:1979gh}, thus becoming known
as the so-called algebraic Bethe ansatz. This new approach provided an algebraic setting allowing one to connect quantum integrability to
the representation theory of quantum groups, which had several advantages. To~start with, the construction of the eigenvectors of a given
integrable model was significantly simplified, hence allowing to address more involved problems such as the calculation of norms~\cite{KorepinNormBetheStates6-Vertex} and scalar products~\cite{SlavnovScalarProductsXXZ} of Bethe vectors and, subsequently, the one of
correlation functions~\cite{IzerginKorepinQISMApproachToCorrFns2SiteModel,KitanineMailletTerrasElementaryBlocksPeriodicXXZ}. Moreover, the
method allowed one to significantly enlarge the family of known integrable models, see, e.g., the review~\cite{KulishSklyaninListOfSolutionsYBEexistentesEn82}, and in particular efficiently and systematically add\-ress the question of
constructing the eigenvectors of the higher rank integrable models~\cite{KulishReshetikhinNestedBASomeGeneralisationstoGL(N)Reps}. However,
it soon turned out that the method had also its limitations in that not all quantum integrable models were within its grasp, the quantum
Toda chain being a prominent example thereof. In 1985, Sklyanin pioneered a new technique allowing one to address the calculation of the
spectrum of this model: the quantum separation of variables~\cite{MR1239668}. He developed several aspects of the method in~\cite{SklyaninSoVGeneralOverviewFuncBA,MR1239668}, and this progress was subsequently continued by Kharchev and~Lebedev~\cite{KharchevLebedevIntRepEigenfctsPeriodicTodaFromWhittakerFctRep,Kharchev:2000yj}
in the case of~the~Toda chain. Derkachov, Korchemsky and Manashov~\cite{DerkachovKorchemskyManashovXXXSoVandQopNewConstEigenfctsBOp,Derkachov:2003qb}, Bytsko and Teschner~\cite{BytskoTeschnerSinhGordonFunctionalBA}, Silantyev~\cite{MR2340732} and, more recently, Maillet and Niccoli~\cite{MailletNiccoliNewQSoVIntoruction} pushed the development of the method in the case of other, more involved, models (see also~\cite{Gromov:2016itr, Ryan:2018fyo,Ryan:2020rfk}). Recently, many important results have appeared in this area~\cite{Cavaglia:2019pow,Gromov:2020fwh,Gromov:2019wmz,Maillet:2019nsy,Maillet:2020ykb}.

In fact, there are nowadays many indications that the
quantum separation of variables is a~much more general technique for solving quantum integrable models that encompasses the algebraic Bethe
ansatz~\cite{DerkachovKorchemskyManashovXXXSoVandQopNewConstEigenfctsBOp} and provides one with the quantum analogue of the classical
separation of variables technique.

In precise terms, the quantum separation of variables consists in exhibiting a map $\mathcal{U}$ bet\-ween an auxiliary Hilbert space
$\mathfrak{h}_{\rm sov}$ and the original Hilbert space $\mathfrak{h}_{\rm org}$ on which a given model is formulated. This map
should be unitary so as to ensure the equivalence of Hilbert space structure and, above all, such that it strongly simplifies the form
taken by the spectral problem associated with a given quantum integrable Hamiltonian. More precisely, integrability of a given quantum
Hamiltonian means that there exists a commutative subalgebra in the space of operators $\{\mathtt{H}_{k}\}$ containing the Hamiltonian.
Thus, the spectral problem associated with the original Hamiltonian is, in fact, a multi-parameter spectral problem, in that each
eigenvector is associated with the tower of eigenvalues of the $\mathtt{H}_{k}$s. Now the role of the map $\mathcal{U}$ is to realise the
unitary equivalence between $\mathfrak{h}_{\rm sov}$ and $\mathfrak{h}_{\rm org}$ in such a way that the original multi-dimensional
(because the Hamiltonians have a non-trivial structure) and multi-parametric spectral problem on $\mathfrak{h}_{\rm org}$ is reduced
into a multi-parametric (because one has to keep track of all the eigenvalues) \textit{one}-dimensional spectral problem on
$\mathfrak{h}_{\rm sov}$. This thus explains the separation of variables terminology. In fact, this \textit{one}-dimensional spectral
problem corresponds to the resolution of the so-called Baxter $T-Q$ equation associated with the model, proving in this way a remarkable
bridge between the spectrum, the $T$-part, and the eigenvectors, the $Q$ part.

Several ingredients are needed so as to implement the separation of variables program as described above. First, one should construct a map
$\mathcal{U}$ satisfying to the desired requirements and then show that it indeed corresponds to a unitary map between the Hilbert spaces.
In~fact, the construction of $\mathcal{U}$ can be dealt with by exploiting the Yang--Baxter algebra underlying the integrability of the
model. A first method was suggested by Sklyanin in~\cite{MR1239668}. Later, an~alternative construction was proposed in~\cite{DerkachovKorchemskyManashovXXXSoVandQopNewConstEigenfctsBOp}, where, in particular, it was pointed out that $\mathcal{U}$ can be
constructed by using the Baxter $Q$ operator associated with the model. For the first time, the idea of a connection between the Baxter $Q$
operator and separation of variables was apparently formulated in the work~\cite{Kuznetsov_1998}.

This last idea was later generalised in~\cite{MailletNiccoliNewQSoVIntoruction} to other conserved quantities, in the case of models having finite-dimensional local Hilbert
spaces. To~be more precise about the construction of $\mathcal{U}$, we recall that the original Hilbert space $\mathfrak{h}_{\rm org}$,
where the model is formulated and $\mathfrak{h}_{\rm sov}$, where the separation of variables takes place can be identified with
appropriate $L^2$ spaces $\mathfrak{h}_{\rm org} =L^2( \mathcal{X}, {\rm d}\nu_{\rm org})$ and
$\mathfrak{h}_{\rm sov}=L^2(\mathcal{Y}, {\rm d}\mu_{\rm sov})$. This is a very general setting which allows $\mathcal{X}$, $\mathcal{Y}$
to be finite, discrete or continuous. Upon such an identification of the Hilbert spaces, the map $\mathcal{U}$ is defined as an integral
transform acting on smooth, compactly supported functions on $\mathcal{Y}$:
\begin{gather*}
[\mathcal{U} \varphi](x) = \int_{ \mathcal{Y} }{} \varphi(y) \Psi_{y}(x) \, {\rm d}\mu_{\rm sov}(y) .
\end{gather*}
The functions $\Psi_{y}(x)$ describing the integral kernel of the transform can be thought of as the analogues of the function
${\rm e}^{{\rm i} y x}$ giving the integral kernel of the Fourier transform. That case, in~fact, corresponds to
$\mathcal{X}=\mathcal{Y}=\mathbb{R}$ and both ${\rm d}\nu_{\rm org}$ and ${\rm d}\mu_{\rm sov}$ coinciding with the Lebesgue measure. In~the
above case, just as $\{ x \mapsto {\rm e}^{{\rm i} y x} \}$ corresponds to the system of generalised eigenfunctions of translation operators, $
\{x \mapsto \Psi_{y}(x)\}$ corresponds to the system of generalised eigenfunctions of a~commutative operator subalgebra of the
representation of the Yang--Baxter algebra which gives rise to the original model of interest. The construction of $\mathcal{U}$ hence boils
down to the construction of this system of eigenfunctions, which become possible since it is reduced to solving hypergeometric like
problems~\cite{MR1239668}, viz.~first order finite difference equations in several variables. In fact, the very structure of the
Yang--Baxter algebra which allows one to construct $ \Psi_{y}(x)$ in the first place, does also ensure that, by construction, $\mathcal{U}$
fulfills the desired requirement of~simplifying the original spectral problem. However, unitarity is a completely different issue. It~boils
down to proving the orthogonality and completeness of the system $\Psi_{y}(x)$ which can be~framed as the following relations understood
in the sense of distributions
\begin{gather}
\int_{ \mathcal{X} } \big( \Psi_{{y}'}(x) \big)^* \Psi_{y}(x) \,{\rm d}\nu_{\rm org}(x) = \frac{1}{{\rm d}\mu_{\rm sov}/ {\rm d}y}\delta_{\rm sov}({y}',y)
\label{ecriture orthogonalite}
\end{gather}
and
\begin{gather}
\int_{ \mathcal{Y} } \big( \Psi_{y}({x}') \big)^* \Psi_{y}(x) \,{\rm d}\mu_{\rm sov}(y) = \frac{1}{{\rm d}\nu_{\rm org}/ {\rm d}x}\delta_{\rm org}({x}',x).
\label{ecriture completude}
\end{gather}
Above $\delta_{\rm sov}({y}',y)$, resp.~$\delta_{\rm org}({x}',x)$, corresponds to the generalised function which represents the
integral kernel of the identity operator on $\mathcal{Y}$, resp.~$\mathcal{X}$. Moreover, ${\rm d}\mu_{\rm sov}/ {\rm d}y$, resp.~${\rm d}\nu_{\rm org}/ {\rm d}x$, is the Radon--Nikodym derivative of $\mu_{\rm sov}$, resp.~$\nu_{\rm org}$, in respect to the canonical
measure on $\mathcal{Y}$, resp.~$\mathcal{X}$.

The technique for proving unitarity of $\mathcal{U}$, viz.~\eqref{ecriture orthogonalite}--\eqref{ecriture completude}, strongly
depends on the dimension of the original Hilbert space $\mathfrak{h}_{\rm org}$. If $\mathfrak{h}_{\rm org}$ is finite-dimensional,
checking unitarity amounts to a simple comparison of dimensions between $\mathfrak{h}_{\rm org}$ and $\mathfrak{h}_{\rm sov}$.
However, many of the physically interesting quantum integrable models are defined on an infinite-dimensional Hilbert space
$\mathfrak{h}_{\rm org}$. There, unitarity is a much more delicate issue. In fact, unitarity was first established for the Toda chain
case by using harmonic analysis of Lie groups techniques~\cite{SemenovTian, WallachRealReductiveGroupsII}. However the methods which were
used to establish this were quite sophisticated and hardly generalisable to the more complex quantum integrable models. The first step
towards proving unitarity, in~a~simpler and systematic way, was achieved in~\cite{DerkachovKorchemskyManashovXXXSoVandQopNewConstEigenfctsBOp}, where a quantum inverse scattering based technique for proving the
isometry of $\mathcal{U}$ was invented. Then,~\cite{Kozlowski:2014jka} developed a technique, solely based on the use of natural objects
for the quantum inverse scattering, allowing one to prove rigorously the isometry of $\mathcal{U}^{\dagger}$ in the case of the Toda
chain. All together with the results of~\cite{DerkachovKorchemskyManashovXXXSoVandQopNewConstEigenfctsBOp}, this construction ensures the
unitarity of $\mathcal{U}$. An even more efficient method allowing one to establish the isometry of $\mathcal{U}^{\dagger}$ was proposed
recently by the authors in~\cite{KozDerkachovManashovUnitaritySoVTransformModularXXZ}. Around the same time, the work~\cite{DMspinchainandSL2RGustafson} has connected certain scalar products of functions being the building blocks of $\mathcal{U}$ to
Gustafson integrals~\cite{MR1139492}.

In the present work, we push further this link and use the relation to the Gustafson integrals, along with the closed formula for the
latter, so as to propose a novel and remarkably simple method for proving the unitarity of the map realising the separation of variables
for the higher spin, non-compact, XXX chains. While focusing on this example, we are deeply convinced of the method's generality and hence
applicability to many other quantum integrable models possessing infinite-dimensional local Hilbert spaces and which are solvable by the
quantum separation of~variables. In order to illustrate the main features of the method without obscuring them by~technicalities of the
model, as a warm up to our main result, we illustrate how it works in the case of the Toda chain.

The paper is organised as follows. Section~\ref{Section Toda chain} outlines, on formal grounds, the key ideas of our method in the case of
the Toda chain. Then, Section~\ref{sect:spinchains} introduces the XXX non-compact spin-chain model along with the main notations. In
particular, it defines the operators $\mathcal{U}$ of interest to the analysis and establishes their isometry. Finally, Section~\ref{Section Completude} establishes the isometry of~their adjoint, and hence completeness of the underlying system of functions giving
rise to their integral kernels.

\section{Preliminaries} \label{Section Toda chain}

In this section we illustrate some details of our approach on the example of the open Toda
chain~\cite{MR561316,SklyaninSoVFirstIntroTodaChain}.
which is a one-dimensional system of $N$ particles on the line associated with the Hamiltonian
\begin{gather*}
H = -\frac{1}{2}\sum_{k=1}^N \, \frac{\partial^2}{\partial x_k^2} +
\sum_{k=1}^{N-1}{\rm e}^{x_k-x_{k+1}}
\end{gather*}
on the Hilbert space $L^2\left(\mathbb{R}^N,{\rm d}^{N}x\right)$. The model is integrable and can be solved by the quantum inverse scattering
method (QISM)~\cite{MR1356514, Faddeev:1979gh}. For further discussion it is important that
eigenfunctions can be constructed iteratively~\cite{KharchevLebedevIntRepEigenfctsPeriodicTodaFromWhittakerFctRep,Kharchev:2000yj},
\begin{gather}\label{toda-rec-psi}
\Psi^\lambda_N(x) = \lim_{\varepsilon \rightarrow 0^+} \int_{\mathbb{R}^{N-1}}
\prod_{k=1}^{N}\prod_{j=1}^{N-1} \Gamma({\rm i} \lambda_k-{\rm i}\gamma_j+\varepsilon)\,
{\rm e}^{{\rm i}(\Lambda-\Gamma)x_{N}}\,\Psi_{N-1}^\gamma(x)\,
\mu_{N-1}(\gamma)\prod_{j=1}^{N-1}{{\rm d}\gamma_j},
\end{gather}
where $x=(x_1,\dots,x_N)$, $\lambda = (\lambda_1,\dots,\lambda_N )$, $\gamma = (\gamma_1,\dots,\gamma_{N-1})$ and
$\Lambda = \sum^N_{j=1}\lambda_j$, $\Gamma = \sum^{N-1}_{j=1}\gamma_j$. The measure $\mu_N(\lambda)$~-- the Sklyanin measure~-- is given
by a product of $\Gamma$-functions
\begin{gather*}
\mu_N^{-1}(\gamma) =(2\pi)^N {N!}\prod^N_{j < k}\Gamma\big({\rm i}(\gamma_k-\gamma_j)\big)
\Gamma\big({\rm i}(\gamma_j-\gamma_k)\big).
\end{gather*}
Finally, the one-particle eigenfunctions are given by plane-waves $\Psi^\lambda_1(y) = {\rm e}^{{\rm i} y \lambda}$.

Note, that equation~\eqref{toda-rec-psi} is nothing other as the expansion of the $N$-particle function $\Psi^{\lambda}_{N}(x_1,\dots,x_N)$ over
the product $\Psi^{\gamma}_{N-1}(x_1,\dots,x_{N-1})\cdot \Psi_1^{\Lambda-\Gamma}(x_N)$. The expansion coefficients are given by~pro\-ducts of
$\Gamma$ functions. This property~-- the possibility to find the expansion coefficients of~$N$ particle functions over $N-1$ particle
functions~-- is very important since it allows one to~prove orthogonality and completeness relations for the eigenfunctions using induction
on~$N$.\footnote{Although in our analysis of spin chains we use a different, more direct, approach to establish the orthogonality of the
eigenfunctions it also can be done inductively.} Indeed, for $N=1$ the eigenfunctions obviously form an orthogonal and complete system.
Let~us assume that the following identities hold for a certain $N\geq 1$
\begin{subequations}
\begin{gather}\label{Toda-perp}
\int_{\mathbb R^N} \Psi_N^\lambda(x) \big(\Psi_N^{\lambda'}(x)\big)^\dagger\, {\rm d}^Nx =
 \mu^{-1}_N(\lambda) \delta^N(\lambda, \lambda'),
\\[2mm]
\label{Toda-compl}
\int_{\mathbb R^N} \Psi_N^\lambda(x) \big(\Psi_N^{\lambda}(x')\big)^\dagger\mu_N(\lambda)\, {\rm d}^N\lambda =\delta^N(x-x'),
\end{gather}
\end{subequations}
where $\delta^N(x-x')=\prod_{k=1}^N\delta(x_k-x'_k)$ and{\samepage
\begin{gather}\label{deltaN}
\delta^N(\lambda, \lambda')=\frac1{N!}\sum_{w\in S_N} \delta^N(\lambda'-w\lambda), \qquad w\lambda=\big(\lambda_{w_1},\dots, \lambda_{w_N}\big)
\end{gather}
and try to prove that these identities hold for $N+1$ as well.}

Let us start with equation~\eqref{Toda-perp}. Substituting $\Psi_{N+1}^\lambda$ in the form~\eqref{toda-rec-psi} into this equation and
integrating over $x$ one gets that the orthogonality condition is equivalent to the following equation
\begin{gather}
\lim_{\varepsilon,\,\varepsilon' \rightarrow 0^+}2\pi \delta(\Lambda-\Lambda')\int_{\mathbb{R}^{N}}
\prod_{k=1}^{N+1}\prod_{j=1}^{N} \Gamma({\rm i} \lambda_k-{\rm i}\gamma_j+\varepsilon)\, \Gamma({\rm i}\gamma_j-{\rm i}\lambda'_k+\varepsilon')
\mu_N(\gamma)\, {\rm d}^N\gamma \nonumber
\\ \hphantom{\lim_{\varepsilon,\,\varepsilon' \rightarrow 0^+}}
{}= \mu^{-1}_{N+1}(\lambda)\, \delta^{N+1}(\lambda, \lambda').\label{first}
\end{gather}
Next, replacing $N\to N+1$ in~\eqref{Toda-compl} and projecting both sides on
the functions $\Psi^\gamma_N(x)$ and~$\Psi^{\gamma'}_N(x)$ one gets that the completeness relation is reduced to the following identity
\begin{gather}
\lim_{\varepsilon,\,\varepsilon' \rightarrow 0^+} \int_{\mathbb R^{N+1}}
\prod_{k=1}^{N}\prod_{j=1}^{N+1} \Gamma({\rm i} \gamma_k-{\rm i}\lambda_j+\varepsilon)
\Gamma({\rm i}\lambda_j-{\rm i}\gamma'_k+\varepsilon')\,
{\rm e}^{{\rm i}\Lambda(x'_N-x_N)}\mu_{N+1}(\lambda)\, {\rm d}^{N+1}\lambda\nonumber
\\ \hphantom{\lim_{\varepsilon,\,\varepsilon' \rightarrow 0^+}}
{}= \mu^{-1}_{N}(\gamma)\,\delta(x_N-x'_N)\,\delta^{N}(\gamma, \gamma').\label{second}
\end{gather}
Of course, both of these relations should be understood in the sense of distributions. Thus the problem of establishing the orthogonality
and completeness relations for the eigenfunctions of the open Toda chain is equivalent to proving these two integral identities. The
problem is greatly simplified by the following remarkable result due to R.A.~Gustafson~\cite[Theorem~5.1]{MR1139492}
\begin{align}\label{gustafson-10}
\int_{\mathbb R^N}\,
\prod_{k=1}^{N+1}\prod_{j=1}^N \Gamma( \alpha_k-{\rm i}\lambda_j)\,\Gamma({\rm i}\lambda_j+\beta_k)\,
\mu_N(\lambda)\, {\rm d}^N\lambda
=\frac{\prod_{k,j=1}^{N+1}\Gamma(\alpha_k+\beta_j)}
{\Gamma\left(\sum_{j=1}^{N+1}(\alpha_j+\beta_j)\right)},
\end{align}
where $\text{Re}(\alpha_k)>0$, $\text{Re}(\beta_k)>0$ for all $k$. Note, that the integral in the l.h.s.~\eqref{gustafson-10} is exactly
the integral appearing in~\eqref{first} which, therefore, can be brought to the form
\begin{align}\label{first-reduced}
\lim_{\varepsilon,\varepsilon' \rightarrow 0^+} (2\pi) \delta(\Lambda-\Lambda')\,
\frac{\prod_{k,j=1}^{N+1}\Gamma({\rm i}\lambda'_j-{\rm i}\lambda_k+\varepsilon+\varepsilon')}
{\Gamma\big({\rm i}\Lambda'-{\rm i}\Lambda+N(\varepsilon+\varepsilon')\big)} = {\mu_{N+1}^{-1}(\lambda)}\,
\delta^{N+1}\left(\lambda,\lambda'\right).
\end{align}
The proof of this identity is already rather straightforward. Some details can be found in~Appen\-dix~\ref{Appendix identite fct delat
evoluee}.

It takes a little more work to prove the identity~\eqref{second}. Having put $\alpha_{N+1}=L$ and $\beta_{N+1}=t L$, $t>0$,
in~\eqref{gustafson-10} and sending $L\to +\infty$ one arrives at the reduced version of the integral~\eqref{gustafson-10} which takes the
form~\cite{DMspinchainandSL2RGustafson}
\begin{gather}\label{gustafson-1a}
\int_{\mathbb R^N} t^{{\rm i}\Lambda}
\prod_{k,\,j=1}^{N} \Gamma(\alpha_k-{\rm i}\lambda_j)\,\Gamma({\rm i}\lambda_j+\beta_k)\,
\mu_N(\lambda)\, {\rm d}^N\lambda
=\frac{t^{A}}{(1+t)^{A+B}}{\prod_{k,\,j=1}^{N}\Gamma(\alpha_k+\beta_j)},
\end{gather}
where $A(B)=\sum_{k=1}^N \alpha_k(\beta_k)$. At the same time representing the l.h.s.\ of equation~\eqref{second} in the following form (for
more details see Lemma~\ref{Lemme isometrie operateur SAB}):
\begin{gather*}
\lim_{L\rightarrow\infty} \lim_{\varepsilon,\,\varepsilon'\rightarrow 0^+} \frac1{\Gamma^{{2N+2}}(L)}\int_{\mathbb R^{N+1}}
\prod_{k,j=1}^{N+1} \Gamma({\rm i} \gamma_k-{\rm i}\lambda_j+\varepsilon)\,
\Gamma({\rm i}\lambda_j-{\rm i}\gamma'_k+\varepsilon')
\\ \hphantom{\lim_{L\rightarrow\infty} \lim_{\varepsilon,\,\varepsilon'\rightarrow 0^+} \frac1{\Gamma^{{2N+2}}(L)}\int_{\mathbb R^{N+1}}
\prod_{k,j=1}^{N+1}}
{}\times{\rm e}^{{\rm i}\Lambda(x'_N-x_N)}
\mu_{N+1}(\lambda) \,{\rm d}^{N+1}\lambda,
\end{gather*}
where ${\rm i}\gamma_{N+1}=-{\rm i}\gamma'_{N+1} =L$, one can evaluate the integral using equation~\eqref{gustafson-1a}. Then, after some algebra, one can
show that \eqref{second} is equivalent to the following identity
\begin{gather}
\lim_{L\rightarrow\infty} \lim_{\epsilon\rightarrow 0^+}
L^{{\rm i}(\Gamma'-\Gamma)}\prod_{k,j=1}^{N}\Gamma\big({\rm i}(\gamma'_k-\gamma_j)+\varepsilon\big)
\sqrt{\frac{L}{4\pi}}{\cosh^{-2L}\bigg(\frac{x_N-x'_N}2\bigg)}\nonumber
 \\ \hphantom{\lim_{L\rightarrow\infty} \lim_{\epsilon\rightarrow 0^+}}
{}= {\mu^{-1}_{N}(\gamma)}
\,\delta(x_N-x'_N)\,\delta^{N}(\gamma,\gamma').\label{second-reduced}
\end{gather}
We recall here that equations~\eqref{first-reduced} and \eqref{second-reduced} have to be understood in the sense of distributions and relegate
further details to Appendix~\ref{Appendix identite fct delat evoluee}.

\section{Spin chains: operators and eigenfunctions}\label{sect:spinchains}

In this section we construct the representation of separated variables for generic spin chain models and recall some elements of
the quantum inverse scattering method (QISM) relevant for our purposes.

The spin chain is a quantum system of interacting spins $S_k^\pm$, $S^0_k$, $k=1,\dots, N$, where the index $k$ enumerates the nodes of the
chain. The spin operators are the symmetry generators of the $\mathrm{SL}(2,\mathbb R)$ group~\cite{MR0207913} which are determined by a
real number (spin) $\boldsymbol s_k>1/2$,
\begin{gather*}
S_k^-=-\partial_{z_k},\qquad
S_k^0=z_k\partial_{z_k}+\boldsymbol s_k,\qquad
S_k^+=z^2_k\partial_{z_k}+2\boldsymbol s_k z_k.
\end{gather*}
Operators with the index $k$ act in a Hilbert space associated with the $k$-th site, $\mathcal H_k$, which is the Hilbert space of
functions holomorphic in the upper complex half-plane, $\mathbb H^+$. The scalar product in the Hilbert space $\mathcal H_k=\big\{ f \in
L^2(\mathbb H^+,\mathrm{d}\mu_{s_k})\colon f \text{ is holomorphic on } \mathbb H^+ \big\}$ is defined as follows
\begin{gather}\label{scalar-product}
(f,\psi)=\int(f(z_k))^\dagger \psi(z_k)\, \mu_{\boldsymbol s_k}(z_k)\,{\rm d}^2z_k.
\end{gather}
The measure takes the form
\begin{gather}
\mu_{s}(z_k)=\frac{2s-1}{\pi}\,\theta(\mathop{\rm Im}z_k)
(2\mathop{\rm Im} z_k)^{2 s-2},
\label{def-measure}
\end{gather}
where $\theta(x)$ is the Heaviside step function.
 The operators $S_k^\alpha$ are anti-hermitian with respect to the scalar product~\eqref{scalar-product}.

Function in $\mathcal H_k$ can be represented by Fourier integrals where the integration runs only over positive momenta
\begin{gather*}
f(z)=\int_0^\infty {\rm e}^{{\rm i}p z} \mathcal{F}[f](p)\, {\rm d}p.
\end{gather*}
Then, in Fourier space, viz.~in the momentum representation, the scalar product takes the form
\begin{gather*}
(f,\psi) =\Gamma(2s) \int_0^\infty \big( \mathcal{F}[f](p) \big)^* \mathcal{F}[\psi](p) p^{1-2s} \, {\rm d}p.
\end{gather*}

One of the main objects of QISM is the monodromy matrix. For the closed/open spin chain of our interest, the monodromy matrix is given by a
product of $L$-operators~\cite{Faddeev:1979gh} which are two by two matrices,
\begin{gather*}
L_k(u)=u+{\rm i}
\begin{pmatrix}
S^0_k& S^-_k\\
S^+_k &-S_k^0
\end{pmatrix}\!,
\end{gather*}
where $u\in \mathbb C$ is the spectral parameter. The monodromy matrix for the closed chain of length~$N$ has the
form~\cite{Faddeev:1979gh}
\begin{gather*}
T_N(u) =L_1(u+\xi_1)\cdots L_N(u+\xi_N)=\begin{pmatrix}
A_N(u)& B_N(u)\\
C_N(u) & D_N(u)
\end{pmatrix}\!,
\end{gather*}
while, for the open spin chain, it is given by the following expression~\cite{MR953215}
\begin{gather}\label{Topen}
\mathbb{T}_N(u) = T_N(-u) \sigma_2 T_N^{t} (u)\sigma_2 = \begin{pmatrix}
\mathbb{A}_N(u) & \mathbb{B}_N(u)\\
\mathbb{C}_N(u) & \mathbb{D}_N(u)
\end{pmatrix}\!,
\end{gather}
where $\sigma_2$ is the Pauli matrix. The entries of the monodromy matrices form commuting polynomial operator families~\cite{MR953215,Faddeev:1979gh} $\left\{A_N(u)\right\}_{u \in \mathbb{C} }$,
$\left\{B_N(u)\right\}_{u \in \mathbb{C}}$, $\left\{\mathbb{B}_N(u)\right\}_{u \in \mathbb{C}}$:
\begin{gather*}
 \left[A_N(u),A_N(v)\right]=\left[B_N(u),B_N(v)\right]=\left[\mathbb B_N(u),\mathbb B_N(v)\right]=0,
\end{gather*}
which act on the Hilbert space of the model, $\mathbb H_N=\bigotimes_{k=1}^N \mathcal H_k$. Operators in each of the commuting families
share the same eigenfunctions. These systems of functions have proven to be very useful for analysing the properties of spin chains.
They determine the so-called Sklyanin representation of separated variables~\cite{MR1239668}. For the homogeneous chains, viz.~$\xi_a=0$, the corresponding systems for $B_N$, $\mathbb B_N$ and $A_N$ operators were constructed in~\cite{Belitsky:2014rba,Derkachov:2003qb,Derkachov:2002tf}, respectively. Below, we recall these constructions and, on the occasion,
extend them to the general case of~inho\-mo\-ge\-neous spin chains where the $\xi_a$'s are generic. Since the technical details are essentially
the same in all three cases, we consider in some detail the $B_N$-system and only quote the results for the other~two.

All three families of eigenfunctions can be represented as a convolution of functions of a~special type. Namely, let us define a
function of two complex variables, $z,w\in \mathbb H^+$, and the variable $\alpha\in \mathbb C$ which is called index,
\begin{gather*}
 D_{\alpha}(z, w)=\bigg(\frac{\rm i}{z-\bar w}\bigg)^\alpha=\frac{1}{\Gamma(\alpha)}
 \int_0^{\infty} {\rm e}^{{\rm i}p\,(z-\bar w)} p^{\alpha-1} \, {\rm d}p.
\end{gather*}
This is a single valued function\footnote{In many cases it is quite helpful to
visualize all further constructions as Feynmann diagrams with the func\-tion~$D_\alpha$ playing the role of a propagator.} of $z$, $w$ which is
fixed by the condition $\arg\big({\rm i}/(x+{\rm i}y)\big)\to 0$ for~$x\to 0$. Some properties of the function $D_\alpha$ can be found in~\cite{DMspinchainandSL2RGustafson}.

\subsection[BN operator]{$\boldsymbol{B_N}$ operator}

\subsubsection{Layer operators}

The eigenfunctions can be most conveniently written down in term of the so-called layer ope\-ra\-tors. Let $\boldsymbol\Lambda_{n+1}(\gamma,x)$
be an operator which maps functions of $n$ complex variables to functions of $n+1$ variables. It~depends on the spectral parameter
$x\in\mathbb C$ and the complex vector $\gamma=(\alpha_1,\dots,\alpha_{n}, \beta_1,\dots,\beta_n) \in \mathbb{C}^{2n}$. Its action takes
the form
\begin{gather*}
\left[\boldsymbol\Lambda_{n+1}(\gamma,x) f\right](z_1,\dots,z_{n+1}) =
\idotsint \prod_{k=1}^{n} D_{\alpha_k-{\rm i}x}(z_k,w_k) D_{\beta_{k}+{\rm i}x}(z_{k+1},w_k)\notag
\\ \hphantom{\left[\boldsymbol\Lambda_{n+1}(\gamma,x) f\right](z_1,\dots,z_{n+1}) =\idotsint}
{}\times f(w_1,\dots,w_{n})\prod_{a=1}^n
\mu_{{(\alpha_a+\beta_{a})}/2}(w_a)
\,{\rm d}^2w_1\cdots {\rm d}^2w_n.
\end{gather*}
The weight function $\mu_{s}(w_j)$ has been defined in equation~\eqref{def-measure}. The integral is well defined provided
$\mathop{\rm Re}(\alpha_k-{\rm i}x)>0$, $\mathop{\rm Re}(\beta_{k}+{\rm i}x)>0$ for $k=1,\dots,n$.

In the momentum representation obtained by taking the Fourier transform
\begin{gather*}
f(z_1,\dots, z_n)=\int_{0}^\infty\dots\int_0^\infty \mathcal{F}[f](p_1,\dots, p_n) \,{\rm e}^{{\rm i}\sum_{k=1}^n p_k z_k}\, {\rm d}p_1\cdots {\rm d}p_n
\end{gather*}
the action of the layer operator can be expressed as
\begin{gather}
\mathcal{F}\left[\boldsymbol\Lambda_{n+1}(\gamma,x) g\right](q_1,\dots,q_{n+1})
 =\lambda_n(\gamma,x)\int_{0}^{q_2} \frac{{\rm d}\ell_1}{p_1}\int_{0}^{q_3}
 \frac{{\rm d}\ell_2}{p_2} \cdots \int_{0}^{q_n} \frac{{\rm d}\ell_{n-1}}{p_{n-1}}\notag
 \\[1ex] \hphantom{\mathcal{F}\left[\boldsymbol\Lambda_{n+1}(\gamma,x) g\right] =\qquad}
 {}\times \mathcal{F}[g](p_1,p_2,\dots,p_n)\prod_{k=1}^{n}
 \bigg(\frac{q_k-\ell_{k-1}}{p_k}\bigg)^{\alpha_k-{\rm i}x-1}
 \bigg(\frac{\ell_k}{p_k}\bigg)^{\beta_k+{\rm i}x-1},\label{Lambda-momentum}
\end{gather}
where $\ell_k$ are the ``loop'' momenta, $ p_k= q_k+\ell_k-\ell_{k-1}$ and $ \ell_0\equiv 0$, $\ell_n\equiv q_{n+1}$ and the factor $\lambda_n$ reads
\begin{gather*}
\lambda_n(\gamma,x)=\prod_{k=1}^{n}\frac{\Gamma(\alpha_k+\beta_k)}{\Gamma(\alpha_k-{\rm i}x)\,\Gamma(\beta_k+{\rm i}x)}.
\end{gather*}
Note also that all momenta in \eqref{Lambda-momentum} are positive and $\sum_{k=1}^{n+1} q_k=\sum_{k=1}^n p_k$.

The layer operators possess two important properties. Let us define a map $t\colon \mathbb C^{2n} \mapsto \mathbb C^{2(n-1)}$ as follows
\begin{gather*}
{t}\gamma= t(\alpha_1,\dots,\alpha_{n},\beta_{1},\dots,\beta_{n})= (\alpha_1,\dots,\alpha_{n-1},\beta_{2},\dots,\beta_{n}).
\end{gather*}
It can be checked that the operators $\boldsymbol \Lambda_n$ satisfy the permutation identity
 \begin{gather}\label{Lambda-permutation-identity}
\boldsymbol\Lambda_{n+1}(\gamma,x)\,\boldsymbol\Lambda_n(t\gamma,x')
 =\boldsymbol\Lambda_{n+1}(\gamma,x')\,\boldsymbol\Lambda_n(t\gamma,x).
\end{gather}
The derivation is based on integral identities for the functions $D_\alpha$ which can be found in~\cite{Derkachov:2002tf}.

Next, for the spin chain of length $N$, we introduce the following combinations of spins and impurities
\begin{gather*}
s_k=\boldsymbol s_k - {\rm i}\xi_k, \qquad
\bar s_k=s_k^*=\boldsymbol s_k + {\rm i}\xi_k, \qquad
k=1,\dots,N,
\end{gather*}
and define the vector $\gamma_N\in\mathbb C^{2N-2}$:
 \begin{gather}\label{definition:gammaN}
\gamma_N=(s_1,\dots, s_{N-1},\bar s_2,\dots, \bar s_N).
\end{gather}
It can be shown, see~\cite{Derkachov:2002tf}, that the operator $\boldsymbol\Lambda_{N}(\gamma_N,x)$ is nullified by $B_N(x)$,
\begin{gather}\label{BLambda=0}
B_N(x)\boldsymbol\Lambda_{N}(\gamma_N,x)=0.
\end{gather}
These two properties of the layer operators are crucial for constructing the eigenfunctions of the operator $B_N(u)$.

\subsubsection{Eigenfunctions}\label{sect:B-eigenfunctions}

 Let us define a function of $N$ complex variables 
\begin{gather}\label{PsiBLambdaform}
\Psi^N_{p,{x}}(z)=\varkappa_N p^{\boldsymbol S-1/2}
 \big[
 \boldsymbol\Lambda_N\left(\gamma_N, x_1\right)
 \boldsymbol\Lambda_{N-1}\left(t\gamma_N, x_2\right)\cdots \boldsymbol\Lambda_2\left(t^{N-2}\gamma_N, x_{N-1}\right) \, E_p\big](z).
\end{gather}
Here $ x=(x_1,\dots,x_{N-1})$, $z=(z_1,\dots, z_N)$, $E_p$ is the exponential function,
$E_p(w)={\rm e}^{{\rm i}pw}$, $p>0$, $\boldsymbol
S\equiv\sum_{k=1}^N \boldsymbol s_k$ and the normalisation constant $\varkappa_N$ reads
\begin{gather*}
\varkappa_N^{-1}=\Bigg(\prod_{k=1}^N\Gamma(s_k+\bar s_k)\Bigg)^{1/2} \prod_{1\leq i<j\leq N}\Gamma(s_i+\bar s_j).
\end{gather*}
Following the lines of~\cite{Kozlowski:2014jka} one can show that the integrals arising from the action of the~$\boldsymbol\Lambda_k$'s in~\eqref{PsiBLambdaform} converge absolutely and that integrations can be performed in an arbitrary order. Due
to the properties of the layer operators, equations~\eqref{Lambda-permutation-identity} and~\eqref{BLambda=0}, the function $ \Psi^N_{p,{x}}$
is a~sym\-met\-ric function of $x_1,\dots, x_{N-1}$ which satisfies the equation $B_N(x_k)\Psi^N_{p,{x}}({z})=0$ for all $k$. Taking into
account that the operator $B_N(u)$ is a polynomial of degree $N-1$ in $u$, $B_N(u)= u^{N-1} \sum_{k=1}^{N} S^-_k$ $+\cdots$, and that
$\sum_{k=1}^{N} S^-_k \cdot \boldsymbol\Lambda_N(\gamma_N, x_1) = \boldsymbol\Lambda_N(\gamma_N, x_1) \cdot \sum_{k=1}^{N-1} S^-_k$
one gets
\begin{gather*}
B_N(u) \Psi^N_{p,{x}}({z})=p(u-x_1)\cdots (u-x_{N-1})\,\Psi^N_{p,{x}}({z}).
\end{gather*}
In the momentum representation, the function $ \mathcal{F}\big[\Psi^N_{p,{x}}\big](q_1,\dots,q_N)$ is given by the convolution of the
layer operators in the momentum representation, equation~\eqref{Lambda-momentum}, acting on the func\-tion~$\delta(p-q)$.

Our aim is to show that the functions $\big\{\Psi^N_{p,{x}},\,\, p>0, {x}\in \mathbb R^{N-1}\big\}$ form a complete orthogonal set in the
Hilbert space $\mathbb H_N$. It~is straightforward to check this statement for $N=1,2$. The proof for general $N$ is more involved and
presents the main task of this paper.

For real $x$ and $p$ the functions $\Psi^N_{p, x}({z})$ do not belong to the Hilbert space $\mathbb H_N$. However, they allow one to
define a linear transform from the Hilbert space $\mathbb H_N^B$ defined below into $\mathbb H_N$:
\begin{gather*}
\mathbb H_N^B =L^2(\mathbb R^+)\otimes L^2_{\rm sym}\left(\mathbb R^{N-1},{\rm d}\mu^B_{N-1}(x)\right), \qquad
{\rm d}\mu^B_{N-1}(x)=\mu^B_{N-1}(x)\, {\rm d}^{N-1}x,
\end{gather*}
where we agree upon
\begin{gather}\label{BmuN}
\mu_{N-1}^{B}(x)=\frac1{(2\pi)^{N-1} (N-1)!}
\frac{\prod_{k=1}^{N-1}\prod_{j=1}^N\Gamma(s_j-{\rm i}x_k)\Gamma(\bar s_j+{\rm i}x_k)}
{\prod_{j< k}\Gamma({\rm i}(x_k-x_j))\,\Gamma({\rm i}(x_j-x_k))}.
\end{gather}
To start with, given a smooth, compactly supported function $\varphi$ on $\mathbb{R}^+\times\mathbb{R}^{N-1}$, one introduces the transform
\begin{gather}\label{Bclosed-map}
\left[ \mathrm T_{N}^{B}\varphi\right]( z)
=\int_{\mathbb R^+} \int_{\mathbb R^{N-1}} \varphi(p, x)
 \Psi^N_{p,{x}}({z})\, \mu_{N-1}^{B}( x)\, {\rm d}p\, {\rm d}^{N-1} x.
\end{gather}

\begin{Theorem}
For any smooth, compactly supported function $\varphi$ on $\mathbb{R}^+\times\mathbb{R}^{N-1}$,
 $ \mathrm T_{N}^{B}\varphi \in \mathbb{H}_N$ and the following relation holds
\begin{gather}\label{unit-B}
\left\|\mathrm T_{N}^{B}\varphi\right\|_{\mathbb H_N}^2=\|{\varphi}\|_{\mathbb H_N^{B}}^2
\equiv \int_{\mathbb R^+} \int_{\mathbb R^{N-1}} |\varphi(p, x)|^2\,
 \mu_{N-1}^{B}( x) \,{\rm d}p\, {\rm d}^{N-1} x.
\end{gather}
As such, $T_{N}^{B}$ extends to a linear isometry $T_{N}^{B}\colon \mathbb H_N^B \mapsto \mathbb H_N$ satisfying
\begin{gather*}
\left\|\mathrm T_{N}^{B}\varphi\right\|_{\mathbb H_N}^2=\|{\varphi}\|_{\mathbb H_N^{B}}^2 .
\end{gather*}
\end{Theorem}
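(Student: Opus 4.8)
The plan is to derive the isometry~\eqref{unit-B} from the orthogonality, in the distributional sense, of the generalised eigenfunctions $\big\{\Psi^N_{p,x}\big\}$. Expanding $\big\|\mathrm T_N^B\varphi\big\|^2_{\mathbb H_N}$ with the scalar product~\eqref{scalar-product} and interchanging the integrations --- licit for smooth compactly supported $\varphi$ thanks to the absolute convergence of the layer integrals noted after~\eqref{PsiBLambdaform} --- reduces~\eqref{unit-B} to the identity
\begin{gather*}
\int \big(\Psi^N_{p',x'}(z)\big)^\dagger\, \Psi^N_{p,x}(z)\prod_{k=1}^N \mu_{\boldsymbol s_k}(z_k)\,{\rm d}^2z_k
= \frac{1}{\mu^B_{N-1}(x)}\,\delta(p-p')\,\delta^{N-1}(x,x'),
\end{gather*}
with $\delta^{N-1}(x,x')=\frac1{(N-1)!}\sum_{w\in S_{N-1}}\delta^{N-1}(x'-wx)$ the symmetrised delta built as in~\eqref{deltaN}. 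The symmetrisation is dictated by~\eqref{Lambda-permutation-identity}, which makes $\Psi^N_{p,x}$ symmetric in $x_1,\dots,x_{N-1}$; as $\varphi$ and $\mu^B_{N-1}$ are symmetric, the factor $1/(N-1)!\sum_w$ then folds back precisely onto $\|\varphi\|^2_{\mathbb H^B_N}$.

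To establish this overlap identity, I would insert the definition~\eqref{PsiBLambdaform} and view the result as a multidimensional integral over the $z$ variables and the internal variables of all the layer operators $\boldsymbol\Lambda_k$. Integrating out the physical variables $z$ against the weight $\mu_{\boldsymbol s_k}$ first factorises the radial part: because each $\Psi^N_{p,x}$ carries total momentum $p$ (equivalently, is supported on $\sum_k q_k=p$ in the momentum representation~\eqref{Lambda-momentum}), one immediately extracts $\delta(p-p')$. The remaining angular integral, carrying all the $x$, $x'$ dependence through the indices of the propagators $D_\alpha$, is exactly of the type connecting the building blocks of $\mathrm T_N^B$ to Gustafson's integrals as in~\cite{DMspinchainandSL2RGustafson}: after the loop integrations it is brought to the canonical shape on the left of~\eqref{gustafson-10}, whose closed-form value is a ratio of $\Gamma$-functions that reconstructs $\mu^B_{N-1}(x)^{-1}\delta^{N-1}(x,x')$.

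The principal obstacle is that this overlap integral is not absolutely convergent and lives only as a distribution, so Gustafson's formula can be invoked solely after regularisation --- mirroring the Toda warm-up that leads to~\eqref{first-reduced}. Concretely, I would displace the propagator indices by $\varepsilon,\varepsilon'>0$ to secure the convergence condition $\mathop{\rm Re}(\alpha_k),\mathop{\rm Re}(\beta_k)>0$ required by~\eqref{gustafson-10}, evaluate in closed form, and only then take $\varepsilon,\varepsilon'\to0^+$. The hard part is to control this singular limit, i.e.\ to prove that the regularised ratio of $\Gamma$-functions converges, when paired with $\varphi$, to $\mu^B_{N-1}(x)^{-1}\delta(p-p')\delta^{N-1}(x,x')$ --- the analogue of the computation relegated to the appendix in the Toda case --- all the while keeping track of the $S_{N-1}$ symmetrisation. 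An inductive route paralleling~\eqref{first} is equally available, exploiting that $\Psi^N_{p,x}=\varkappa_N p^{\boldsymbol S-1/2}\boldsymbol\Lambda_N(\gamma_N,x_1)$ applied to the length-$(N-1)$ eigenfunction with shifted data $t\gamma_N$, the inductive step again resting on~\eqref{gustafson-10}.

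Lastly, the membership $\mathrm T_N^B\varphi\in\mathbb H_N$ is secured on its own: each $\Psi^N_{p,x}(z)$ is holomorphic in $z\in(\mathbb H^+)^N$, being built from the propagators $D_\alpha$, and the absolute convergence of the defining integrals together with the norm bound~\eqref{unit-B} places $\mathrm T_N^B\varphi$ in $\mathbb H_N$.
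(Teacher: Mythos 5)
Your overall architecture (regularise, evaluate the overlap in closed form, take a distributional limit paired against $\varphi\varphi^*$) matches the paper's, but there are two substantive gaps. First, you misidentify the closed-form evaluation: the overlap $\big(\Psi^N_{p',x'},\Psi^N_{p,x}\big)_{\mathbb H_N}$ is \emph{not} a Gustafson integral. The left-hand side of~\eqref{gustafson-10} is an integral over the \emph{separated} variables against the Sklyanin measure; in the overlap you propose to compute, $x$ and $x'$ are fixed external parameters and no such integration is present. (In the Toda warm-up the orthogonality does reduce to~\eqref{gustafson-10}, but only because the recursion~\eqref{toda-rec-psi} expands $\Psi_N^\lambda$ as an integral over the lower-level separated variables $\gamma$; the spin-chain eigenfunctions~\eqref{PsiBLambdaform} are built by a single application of a layer operator, not by such an integral.) What the paper actually does is pass to the momentum representation~\eqref{Lambda-momentum} and evaluate a Feynman-diagram-type integral over loop momenta ranging over a \emph{compact} region, using chain identities for the propagators $D_\alpha$; this yields the explicit function $C_N^{(\epsilon,\epsilon')}$ of~\eqref{CNresult}, with Fubini justified by comparison with a manifestly positive integrand. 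Gustafson's integrals enter only later, in Section~\ref{Section Completude}, where one genuinely integrates over separated variables (e.g.~\eqref{Gustafson-1}). So the key computational input of your proof is pointed at the wrong tool.

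Second, your justification of the interchange of integrations and of the membership $\mathrm T_N^B\varphi\in\mathbb H_N$ does not hold up. Absolute convergence of the layer integrals defining each individual $\Psi^N_{p,x}(z)$ does not give absolute convergence of the combined $(z,p,p',x,x')$ integral --- indeed the overlap is a distribution containing $\delta(p-p')$, so that integral cannot be absolutely convergent and Fubini cannot be applied to the \emph{unregularised} transform. Relatedly, invoking ``the norm bound~\eqref{unit-B}'' to place $\mathrm T_N^B\varphi$ in $\mathbb H_N$ is circular, since~\eqref{unit-B} is what is being proved and its left-hand side is not even defined until membership is known. The paper closes both gaps at once by working throughout with the regularised transform $\mathrm T_N^{B,\epsilon}\varphi$ (for which the overlaps are genuine functions of $x$, $x'$ and Fubini applies), establishing $\big\|\mathrm T_N^{B,\epsilon}\varphi\big\|^2=K+o(1)$ with $K=\|\varphi\|^2_{\mathbb H_N^B}$, then deducing $\big\|\mathrm T_N^B\varphi\big\|^2\le K$ from pointwise convergence and Fatou's lemma (which is what secures membership), and finally the reverse inequality from $\big\|\mathrm T_N^B\varphi-\mathrm T_N^{B,\epsilon}\varphi\big\|^2\ge 0$. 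Some version of this limiting argument is indispensable and is absent from your proposal beyond the acknowledgement that the singular limit is ``the hard part''.
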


One may already draw several consequences from this theorem. First of all, \eqref{unit-B} ensures that the system of functions
$\left\{\Psi^{N}_{p,{x}},\, p\in \mathbb R^+, \,x \in \mathbb R^{N-1}\right\}$ forms an orthogonal system in $\mathbb H_N$, viz.~that
\begin{gather*}
\left(\Psi^N_{p', x'}, \Psi^N_{p,{x}}\right)_{\mathbb{H}_N}= \delta(p-p')\,
\delta^{N-1}( x , x') \, \left(\mu_{N-1}^{B}( x)\right)^{-1}\!,
\end{gather*}
where the multi-dimensional Dirac delta-function has been introduced in \eqref{deltaN} while $\mu_N^{B}( x)$ has been defined in~\eqref{BmuN}. This corresponds to the orthogonality relation for the system $\left\{\Psi^N_{p,{x}}(z)\right\}$.

Next, the equality~\eqref{unit-B} implies that $\|\mathrm T_{N}^{B}\|=1$ and that the image of $\mathbb H_{N}^{B}$ is a closed subspace of~$\mathbb H_N$. Now, if one is able to show that $\mathrm T_{N}^{B}$ is a unitary operator, what amounts to sho\-wing~$\mathrm T_{N}^{B}\mathbb H_{N}^{B}=\mathbb H_N$, then this will also ensure that the system of functions $\big\{(p,x)\mapsto \Psi^{N}_{p,{x}}(z)$, $z\in (\mathbb{H}^+)^{N}\big\}$ forms an complete system in $\mathbb H_N$, viz.~that given $z_a=x_a+{\rm i}y_a$
\begin{gather*}
\big(\Psi^N_{*,*}(z^{\prime} ), \Psi^N_{*,*}(z)\big)_{ \mathbb{H}_N^{B} } = \mathcal{I}(z,z').
\end{gather*}
The r.h.s.\ of this equation, $\mathcal I(z,z')$, is the kernel of the unit operator in $\mathbb H_N$~-- the so-called reproducing kernel,
see, e.g.,~\cite{HallBrian}~-- which takes the form $\mathcal I(z,z')=\prod_{k=1}^N \mathcal I_k(z_k,z'_k)$, where
\begin{gather*}
\mathcal I_k(z_k,z'_k)=D_{2s_k}(z_k,z'_k)=\bigg(\frac{\rm i}{z_k-\bar z'_k}\bigg)^{2\boldsymbol s_k}\!.
\end{gather*}
It means that for any $\Psi(z)\in \mathbb H_N$ the following identity holds
\begin{gather*}
\Psi(z)=\int I(z,z')\Psi(z')\prod_{k=1}^N \mu_{\boldsymbol s_k}(z_k)\,{\rm d}^2z_k.
\end{gather*}

The unitarity of $\mathrm T_{N}^{B}$, viz.~that the map has dense range, will be established in Section~\ref{Section Completude},
hence leading to the main result of the paper.

\begin{proof}
In order to prove the theorem, we first establish that \eqref{unit-B} holds for smooth, compactly supported functions $\varphi$ on~$\mathbb{R}^+\times \mathbb{R}^{N-1}$. For that purpose, let us introduce the regularised function~$\Psi^{N,\epsilon}_{p,{x}}({z})$ which
is obtained from $\Psi^{N}_{p,{x}}({z})$ by giving small positive imaginary parts to all variables~$x_k$, $x_k\to x_k+{\rm i}\epsilon_k$,
$\epsilon_k>0$ and by changing $\bar s_N\to \bar s^{\epsilon}_N=\bar s_N+\sum_{k=1}^N \epsilon_k$ in the definition of the vector
$\gamma_N$, equation~\eqref{definition:gammaN}. Further, let
\begin{gather*}
\big[ \mathrm T_{N}^{B,\epsilon}\varphi( z)\big]
=\int_{\mathbb R^+} \int_{\mathbb R^{N-1}} \varphi(p, x)\,
 \Psi^{N,\epsilon}_{p,{x}}({z})\, \mu_{N-1}^{B}( x)\, {\rm d}p\, {\rm d}^{N-1} x.
\end{gather*}
One may readily check that $\big[ \mathrm T_{N}^{B,\epsilon}\varphi\big] \rightarrow\big[ \mathrm T_{N}^{B}\varphi\big]( z)$ pointwise as $\epsilon
=(\epsilon_1,\dots, \epsilon_N)\rightarrow 0^+$. We want to show that
\begin{gather*}
\lim_{\epsilon, \epsilon' \rightarrow 0^+} \big( \mathrm T_{N}^{B,\epsilon'} \varphi,
\mathrm T_{N}^{B,\epsilon}\varphi \big)_{\mathbb H_N} = \big\|\mathrm T_{N}^{B}\varphi\big\|_{\mathbb H_N}^2.
\end{gather*}
Let us demonstrate that one can invoke Fubini's theorem to get
\begin{gather*}
\big\|\mathrm T_{N}^{B,\epsilon}\varphi\big\|_{\mathbb H_N}^2
= \int_{(\mathbb R^+)^2}\!\!{\rm d}p\, {\rm d}p^{\prime} \int_{\mathbb R^{2(N-1)} }
\hspace{-4mm} {\rm d}^{N-1} x\, {\rm d}^{N-1} {x}' \varphi(p, x)\,\varphi^*(p', {x}')\mu_{N-1}^{B}( x)
\\ \hphantom{\big\|\mathrm T_{N}^{B,\epsilon}\varphi\big\|_{\mathbb H_N}^2
= \int_{(\mathbb R^+)^2}\!\!{\rm d}p\, {\rm d}p^{\prime} \int\quad}
{}\times \mu_{N-1}^{B}( {x}') \big(\Psi^{N,\epsilon'}_{p',{x}'},\Psi^{N,\epsilon}_{p,{x}}\big)_{\mathbb H_N} \, .
\end{gather*}

The scalar product of the regularised functions $\Psi^{N,\epsilon}_{p,{x}}$ may be computed in closed form as~\cite{Derkachov:2002tf}
\begin{gather*}
\big(\Psi^{N,\epsilon'}_{p',{x}'},\Psi^{N,\epsilon}_{p,{x}}\big)_{\mathbb H_N}=\delta(p-p') \, C_N^{(\epsilon, \epsilon')}(p, x, x'),
\end{gather*}
where
\begin{gather}
C_N^{( \epsilon, \epsilon')}(p, x, x') =\bigg(\frac{\varkappa_N^2}{\varkappa^{\epsilon}_N\varkappa^{\epsilon'}_N}\bigg)
p^{\sum_{k=1}^N(\epsilon_k+\epsilon'_k)} \,
\frac{\Gamma\Big(\epsilon_N+\epsilon'_N+{\rm i}\sum_{k=1}^{N-1}(x_k-x'_k)\Big)} {\Gamma\Big(\sum_{k=1}^N\epsilon_k+\epsilon'_k\Big)}\nonumber
\\ \qquad\
{}\times \frac{\prod_{k,j=1}^{N-1}\Gamma\big({\rm i}(x'_k-x_j)+\epsilon_{kj}\big)}
{\prod_{k=1}^{N-1}\Gamma\big( \bar s^\epsilon_N\!+\!{\rm i}x_k\!-\!\epsilon_k\big)
\Gamma\big(s_N^{\epsilon'}\!-\!{\rm i}x'_k\!-\!\epsilon_k\big)\prod_{j=1}^{N\!-\!1}
\Gamma\big( \bar s_j\!+\!{\rm i}x'_k\!+\!\epsilon'_k\big)
\Gamma(s_j\!-\!{\rm i}x_k\!+\!\epsilon_k)},\label{CNresult}
\end{gather}
and
$\epsilon_{kj}=\epsilon'_k+\epsilon_j$ and $\varkappa^{\epsilon}_N=\varkappa_N(
s^{\epsilon})$.

In order to obtain this result it is convenient to perform calculation in the momentum space representation. Using the momentum
representation for the layer operators~\eqref{Lambda-momentum} one can obtain an expression for $C_N^{( \epsilon, \epsilon')}(p, x, x')$ in
the form of a multidimensional momentum integral (which can be thought of as a Feynman diagram)
\begin{gather}\label{CNlij}
C_N^{( \epsilon, \epsilon')}(p,x, x')
 = \int_X f\left( \epsilon,p, x, x',\{\ell_{ij}\}\right) \prod_{ij}{\rm d}\ell_{ij}.
\end{gather}
Here the function $f$ is a product of linear combinations of momenta $\ell_{ij}$ and $p$ raised to some powers. It~is important to note
that all these combinations are positive and that the integration runs over a compact region $X$. Performing integrations in a special
order using the integral identities for the product of the propagators, see,
e.g.,~\cite{Derkachov:2003qb,Derkachov:2002tf,DMspinchainandSL2RGustafson}, gives the expression~\eqref{CNresult}. Of~course one
has to justify that the order of integrations does not influence the answer. To~this end we note that the integral of $|f|$ can be written
in the form
\begin{gather*}
\int \left|f\left( \epsilon,p, x, x',\{\ell_{ij}\}\right)\right| \prod_{ij}{\rm d}\ell_{ij} =
R( x, x',\epsilon)\int f\left( \epsilon,p, 0, 0,\{\ell_{ij}\}\right)\big|_{\xi_1=\dots=\xi_N=0} \prod_{ij}{\rm d}\ell_{ij},
\end{gather*}
where $R( x, x',\epsilon)$ is some nonsingular factor given by a product of $\Gamma$ functions. The function $f\left(\epsilon,p, 0, 0,\{\ell_{ij}\}\right)\big|_{\xi_1=\dots=\xi_N=0}$ is positive and the integral is a particular case of equation~\eqref{CNlij}. Thus this integral can
be evaluated, as was discussed before, in a closed form, see equation~\eqref{CNresult}. Then, by Fubini theorem, the integral \eqref{CNlij}
exists, the integrations can be performed in an~arbi\-t\-rary order, what thus justifies the result~\eqref{CNresult}.

Equation~\eqref{CNresult} therefore leads to
\begin{gather*}
\big( \mathrm T_{N}^{B,\epsilon'} \varphi , \mathrm T_{N}^{B,\epsilon} \varphi \big)_{\mathbb H_N}
 =\int_{\mathbb R^+} \int_{\mathbb R}\cdots \int_{\mathbb R}
\varphi(p, x)(\varphi(p, x'))^\star \,
C_N^{( \epsilon, \epsilon')}(p, x, x')
\\[1ex] \hphantom{\big( \mathrm T_{N}^{B,\epsilon'} \varphi , \mathrm T_{N}^{B,\epsilon} \varphi \big)_{\mathbb H_N}
 =\int_{\mathbb R^+} \int_{\mathbb R}\cdots \int_{\mathbb R}}
{}\times \mu_{N-1}^{B}( x) \, \mu_{N-1}^{B}( x') \, {\rm d}p\, {\rm d}^{N-1} x\, {\rm d}^{N-1} x',
\end{gather*}
where we recall that $ \varphi$ is a smooth function with a compact support. For $ \epsilon,\epsilon'\to 0^+$ the integral in the r.h.s.\ can
be easily estimated, see Appendix~\ref{Appendix identite fct delat evoluee} for the details, resulting in
\begin{gather*}
\big( \mathrm T_{N}^{B,\epsilon'} \varphi , \mathrm T_{N}^{B,\epsilon} \varphi \big)_{\mathbb H_N}= K+ o(1),
\end{gather*}
where
\begin{gather*}
K=\int_{\mathbb R_+}\int_{\mathbb R^{N-1}}
 |\varphi(p, x)|^2 \mu_{N-1}^B( x) \,{\rm d}p \, {\rm d}x ^{N-1}
\end{gather*}
and $\mu_{N-1}^{B}$ is as introduced in \eqref{BmuN}.

Since at $ \epsilon\to 0^+$, one has that $\big[ \mathrm T_{N}^{B,\epsilon}\varphi\big] \rightarrow \big[ \mathrm T_{N}^{B}\varphi\big]( z)$ almost
everywhere, it follows from Fatou's theorem that
\[
\big\| \mathrm T_{N}^{B}\varphi \big\|^2_{\mathbb H_N} \leq \liminf_{ \epsilon\to 0^+}\big\|\mathrm T_{N}^{B,\epsilon}\varphi\big\|^2_{\mathbb H_N} =K.
\]
Thus, the function $\mathrm T_{N}^{B}\varphi$ belongs to the Hilbert space $\mathbb H_N$.
Finally, taking into account that
$\big(\mathrm T_{N}^{B}\varphi,\mathrm T_{N}^{B,\epsilon}\varphi\big)_{\mathbb H_N}=K+o(1)$
one derives from $\big\|\mathrm T_{N}^{B}\varphi -T_{N}^{B,\epsilon}\varphi\big\|^2_{\mathbb H_N}\geq0$ that
$ K\leq\big\|\mathrm T_{N}^{B}\varphi\big\|^2 $. Thus one gets for the norm of $\mathrm T_{N}^{B}\varphi$,
$\big\|\mathrm T_{N}^{B}\varphi\big\|^2_{\mathbb H_N}=K.$

Finally, the remaining follows from the fact that the set of smooth functions with a compact support is dense in the Hilbert space $\mathbb H_N^{B}$.
\end{proof}

\subsection[AN operator]{$\boldsymbol{A_N}$ operator}

In this section we give a brief description of the eigenfunctions of the operator $A_N$. We start with defining of a layer operator
suitable for this case. Let $\eta$ be a complex vector and
\begin{gather}\label{gamma-A}
\eta=(\alpha_1,\dots,\alpha_{n+1},\beta_{1},\dots,\beta_n) \in \mathbb C^{2n+1} .
\end{gather}
{\samepage
The layer operator $\boldsymbol \Lambda_{n+1}^{(\sigma)}(\eta,x)$, which depends on the vector $\eta$ and two complex parameters, $x$~and~$\sigma$, $\textrm{Im}\,\sigma\geq 0$, maps a function of $n$-complex variables to a function of $n+1$ variables as follows
\begin{gather}
\big[\boldsymbol\Lambda^{(\sigma)}_{n+1}(\eta,x) f\big](z_1,\dots,z_{n+1}) \!=\!D_{\alpha_{n+1}-{\rm i}x}(z_{n+1},\sigma)\!\int\! \cdots\!\int\!\!
\prod_{k=1}^{n}\! D_{\alpha_k-{\rm i}x}(z_k,w_k) D_{\beta_{k}+{\rm i}x}(z_{k+1},w_k)\notag
\\ \hphantom{\big[\boldsymbol\Lambda^{(\sigma)}_{n+1}(\eta,x) f\big](z_1,\dots,z_{n+1}) \!=}
{}\times f(w_1,\dots,w_{n}) \prod_{a=1}^n
\mu_{{(\alpha_a+\beta_{a})}/2}(w_a)
\, {\rm d}^2w_1\cdots {\rm d}^2w_n.\label{Lambda-A}
\end{gather}
The integrals converge provided $\mathop{\rm Re}(\alpha_k-{\rm i}x)>0$,
$\mathop{\rm Re}(\beta_{k}+{\rm i}x)>0$ for $k=1,\dots,n$.

}

Let $\varrho$ be a map: $\mathbb C^{2n+1}\mapsto \mathbb C^{2n-1}$,
\begin{gather*}
\varrho\eta=\varrho(\alpha_1,\dots,\alpha_{n+1},\beta_{1},\dots,\beta_n)
=(\alpha_1,\dots,\alpha_n,\beta_{2},\dots,\beta_{n}).
\end{gather*}
The layer operators satisfy the following permutation relation~\cite{Belitsky:2014rba},
\begin{gather}\label{sigma-perm}
\boldsymbol\Lambda^{(\sigma)}_{n+1}(\eta,x)\,\boldsymbol\Lambda^{(\sigma)}_n(\varrho\eta,x')
 =\boldsymbol\Lambda^{(\sigma)}_{n+1}(\eta,x')\,\boldsymbol\Lambda^{(\sigma)}_n(\varrho\eta,x).
\end{gather}
Let us put
\begin{gather}\label{gammaNAB}
\eta_N\equiv\big( s_1,\dots, s_N, \bar s_2,\dots, \bar s_N\big)
\end{gather}
and define the function $\Phi^{(\sigma)}_{{x}}(z_1,\dots,z_N)$ as
\begin{gather*}
 \Phi^{N}_{\sigma,{x}}(z)=\varkappa_N \Big( \boldsymbol\Lambda^{(\sigma)}_N(\eta_N, x_1)
 \boldsymbol\Lambda^{(\sigma)}_{N-1}(\varrho\eta_N, x_2)\cdots \boldsymbol\Lambda^{(\sigma)}_1\big(\varrho^{N-1}\eta_N, x_{N}\big)\Big).
\end{gather*}
By virtue of equation~\eqref{sigma-perm} $ \Phi^{N}_{\sigma,{x}} $ is a symmetric function of $x_1,\dots,x_N$. It~can be shown, see, e.g.,~\cite{Derkachov:2002tf}, that the operator $A_N(x_1)+\sigma B_N(x_1)$ annihilates the layer operator $
\boldsymbol\Lambda^{(\sigma)}_N(\eta_N, x_1)$,
\begin{gather*}
\big(A_N(x_1)+\sigma B_N(x_1)\big)\boldsymbol\Lambda^{(\sigma)}_N(\gamma_N, x_1)=0
\end{gather*}
and, hence, the function $\Phi^N_{\sigma,{x}}$ satisfies the equation
\begin{gather*}
\big(A_N(u)+\sigma\, B_N(u)\big)\Phi^N_{\sigma,{x}}( z)=
(u-x_1)\cdots(u-x_N)\Phi^N_{\sigma,{x}}( z).
\end{gather*}
Thus the function $\Phi^N_{{x}}\equiv\Phi^N_{\sigma=0,{x}}$ diagonalizes the operator $A_N(u)$.
 For the
separated variables~$x$ with small positive imaginary parts the function $\Phi^N_{\sigma,x}$ has a finite norm. Indeed one can find for the
scalar product of two $\Phi^N$ functions~\cite{Belitsky:2014rba}
\begin{gather}
\label{s-product-T}
\big(\Phi^N_{\sigma, y}, \Phi^N_{\upsilon, x}\big)_{\mathbb H_N} =
 \bigg(\frac {\rm i}{\sigma-\bar\upsilon}\bigg)^{{\rm i}(\bar Y-X)}
\frac{\prod_{k,j=1}^N \Gamma\left({\rm i}(\bar y_k-x_j)\right)
}{\prod_{k=1}^{N}\prod_{j=1}^N\Gamma( s_j-{\rm i}x_k)\,\Gamma(\bar s_j+ {\rm i}\bar y_k)}.
\end{gather}
Here $X=\sum_{k=1}^N x_k$, $Y=\sum_{k=1}^N y_k$.
\smallskip

 For real $x$ the functions $\Phi^N_{ x}$ are orthogonal to
each other (see Appendix~\ref{Appendix identite fct delat evoluee} for more details)
\begin{align}\label{PhiAort}
 \big(\Phi^N_{ x'}, \Phi^N_{ x}\big) &=\lim_{\sigma\to0}\lim_{\epsilon\to 0^+}\big(\Phi^N_{\sigma, x'+{\rm i}\epsilon}, \Phi^N_x\big)
 =\delta^N( x,x')\, \big(\mu_N^A( x)\big)^{-1},
\end{align}
where
\begin{gather*}
 \mu_N^A( x) = \frac1{(2\pi)^N N!}
 \frac{\prod_{k=1}^{N}\prod_{j=1}^N\big[\Gamma( s_j-{\rm i}x_k)\,\Gamma(\bar s_j+{\rm i}x_k)\big]}
 {\prod_{j< k} \Gamma({\rm i}(x_k-x_j))\,\Gamma({\rm i}(x_j-x_k))}.
\end{gather*}
Upon repeating the argument given in the previous subsection, one can prove the following statement:

\begin{Theorem}
Let $\mathbb H_N^A$ be the Hilbert space of symmetric functions
\begin{gather*}
\mathbb H_N^A =L^2_{\rm sym}\big(\mathbb R^N, {\rm d}\mu^A_N(x)\big),\qquad
{\rm d}\mu^A_N(x)=\mu^A_N(x)\, {\rm d}^Nx.
\end{gather*}
The transformation $\mathrm T_N^{A}$ defined for smooth, compactly supported functions $\chi$ on $\mathbb{R}^N{:}$
\begin{gather}\label{A-map}
\Psi_{\chi}(z)=\big[\mathrm T_N^A\chi\big](z)
=\int_{\mathbb R^{N}} \chi(x)\, \Phi^N_{{x}}( z)\, \mu_{N}^{A}( x)\, {\rm d}^{N} x
\end{gather}
extends into a linear isometry from $\mathbb H_N^A$ into $\mathbb H_N$. In particular it has unit operator norm $\big\|\mathrm T_{N}^{A}\big\|=1$
and satisfies
\begin{gather*}
\|\mathrm T_N^A\chi\|^2_{\mathbb H_N^{\phantom{A}}} = \| \chi\|^2_{\mathbb H_N^{A}} = \int_{\mathbb R^{N}} |\chi( x)|^2\, \mu_{N}^{A}( x)\, {\rm d}^{N} x.
\end{gather*}
\end{Theorem}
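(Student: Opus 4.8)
The plan is to mirror exactly the structure of the proof of the $B_N$-operator theorem given in the previous subsection, since the author explicitly signals this by writing ``Upon repeating the argument given in the previous subsection.'' First I would introduce a regularised family $\Phi^{N}_{\sigma,x+\mathrm{i}\epsilon}$ of eigenfunctions obtained by shifting the separated variables $x_k\to x_k+\mathrm{i}\epsilon_k$ with $\epsilon_k>0$ and keeping the auxiliary parameter $\sigma$ with $\mathop{\mathrm{Im}}\sigma\geq 0$, so that the associated regularised transform
\begin{gather*}
\big[\mathrm T_N^{A,\sigma,\epsilon}\chi\big](z)=\int_{\mathbb R^N}\chi(x)\,\Phi^N_{\sigma,x+\mathrm{i}\epsilon}(z)\,\mu_N^A(x)\,{\rm d}^Nx
\end{gather*}
is well defined and, by the finite-norm property noted before equation~\eqref{s-product-T}, maps into $\mathbb H_N$. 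The key computational input is the closed-form scalar product~\eqref{s-product-T}: substituting it into $\big\|\mathrm T_N^{A,\sigma,\epsilon}\chi\big\|^2_{\mathbb H_N}$ and invoking Fubini's theorem (justified, as in the $B_N$ case, by extracting a nonsingular $\Gamma$-function prefactor and reducing the absolute-value integrand to a positive integral over a compact momentum region of the Feynman-diagram type) reduces the squared norm to a finite-dimensional integral against the product $\mu_N^A(x)\mu_N^A(x')$ of the kernel $\big(\Phi^N_{\sigma,x'+\mathrm{i}\epsilon'},\Phi^N_{\sigma,x+\mathrm{i}\epsilon}\big)_{\mathbb H_N}$.

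Next I would carry out the double limit. The orthogonality relation~\eqref{PhiAort}, namely $\lim_{\sigma\to 0}\lim_{\epsilon\to 0^+}\big(\Phi^N_{\sigma,x'+\mathrm{i}\epsilon},\Phi^N_x\big)=\delta^N(x,x')\,(\mu_N^A(x))^{-1}$, shows that in the distributional sense the regularised kernel collapses onto the diagonal, producing the symmetrised Dirac mass of~\eqref{deltaN}. Feeding this into the reduced integral and using that $\chi$ is smooth of compact support and that $\mathbb H_N^A$ consists of \emph{symmetric} functions—so that the sum over $w\in S_N$ in~\eqref{deltaN} contributes $N!$ identical terms which cancel the $1/N!$ in $\mu_N^A$—one obtains
\begin{gather*}
\lim_{\sigma\to 0}\lim_{\epsilon,\epsilon'\to 0^+}\big(\mathrm T_N^{A,\sigma,\epsilon'}\chi,\mathrm T_N^{A,\sigma,\epsilon}\chi\big)_{\mathbb H_N}=\int_{\mathbb R^N}|\chi(x)|^2\,\mu_N^A(x)\,{\rm d}^Nx\equiv K.
\end{gather*}
Finally, the passage from the regularised norms back to $\big\|\mathrm T_N^A\chi\big\|^2_{\mathbb H_N}$ uses the same soft-analysis sandwich as the $B_N$ proof: pointwise (a.e.) convergence $\Phi^N_{\sigma,x+\mathrm{i}\epsilon}\to\Phi^N_x$ together with Fatou's lemma gives $\big\|\mathrm T_N^A\chi\big\|^2\leq K$ and places $\mathrm T_N^A\chi$ in $\mathbb H_N$, while expanding $\big\|\mathrm T_N^A\chi-\mathrm T_N^{A,\sigma,\epsilon}\chi\big\|^2\geq 0$ and using $\big(\mathrm T_N^A\chi,\mathrm T_N^{A,\sigma,\epsilon}\chi\big)_{\mathbb H_N}=K+o(1)$ yields the reverse inequality $K\leq\big\|\mathrm T_N^A\chi\big\|^2$. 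Density of smooth compactly supported functions in $\mathbb H_N^A$ then extends $\mathrm T_N^A$ to the claimed isometry, whence $\big\|\mathrm T_N^A\big\|=1$.

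I expect the main obstacle to be the rigorous justification of interchanging the order of integration and of the two limits (first $\epsilon\to 0^+$, then $\sigma\to 0$), precisely because the scalar-product kernel~\eqref{s-product-T} is singular on the diagonal $x=x'$: the $\Gamma$-functions $\Gamma(\mathrm{i}(\bar y_k-x_j))$ develop poles as the regulators are removed, and it is the careful tracking of these poles—encoded in the distributional identity~\eqref{PhiAort}—that produces the delta functions rather than a divergence. Controlling this requires estimating the reduced integral uniformly in the regulators, which is the content relegated to Appendix~\ref{Appendix identite fct delat evoluee}; the positivity and compact-support structure of the underlying Feynman-type momentum integral is what makes Fubini applicable and thereby tames the limit. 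The remaining steps, being direct analogues of the $B_N$ argument, are routine.
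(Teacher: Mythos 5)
Your proposal is correct and follows essentially the same route as the paper, which itself proves this theorem only by the remark ``Upon repeating the argument given in the previous subsection'': you correctly identify the regularised transform $\mathrm T_N^{A,\sigma,\epsilon}$, the closed-form scalar product~\eqref{s-product-T} as the analogue of~\eqref{CNresult}, the distributional collapse~\eqref{PhiAort} handled via Appendix~\ref{Appendix identite fct delat evoluee}, and the Fatou/positivity sandwich plus density that upgrades the regularised computation to the isometry. Nothing essential is missing.
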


It will be show in Section~\ref{Section Completude} that $\mathrm T_N^A$ is, in fact, an unitary map between the corresponding Hilbert
spaces.

\subsection[BN operator]{$\boldsymbol{\mathbb B_N}$ operator}

Let us construct eigenfunctions of the operator $\mathbb B_N(u)$, see equation~\eqref{Topen}. It~can be shown~\cite{Derkachov:2003qb} that $
\mathbb B_N(u)=(2u+{\rm i}) \widehat{\mathbb B}_N(u)$, where $\widehat{\mathbb B}_N(u)=\widehat{\mathbb B}_N(-u)$. In order to write down
eigenfunctions of $\widehat{\mathbb B}_N(u)$ we define the corresponding layer operator, $\widetilde{ \boldsymbol \Lambda}_{n+1}(\eta,x)$,
where $\eta\in \mathbb C^{2n+1}$, see equation~\eqref{gamma-A} and $x\in\mathbb C$ is a~spectral parameter, maps function of $n$-complex variables to a~function of $n+1$ variables.
The layer operator is written in terms of the operators $\boldsymbol \Lambda_n^{(\sigma)}$, defined in the previous section, equation~\eqref{Lambda-A}, as follows
\begin{gather}\label{openLambda}
\big[\widetilde{ \boldsymbol \Lambda}_{n+1}(\eta,x) f\big](z)
=\!
\int \big[\boldsymbol \Lambda^{(\sigma)}_{n+1}(\eta,x)\boldsymbol \Lambda^{(\sigma)}_{n}(\varrho\eta,-x) f\big](z_1,\dots, z_{n+1})
\mu_{{(\alpha_{n}+\alpha_{n+1})}/2}(\sigma)\, {\rm d}^2\sigma .
\end{gather}
Above, the product of two layer operators, $\boldsymbol \Lambda^{(\sigma)}_{n+1}(\eta,x)\boldsymbol \Lambda^{(\sigma)}_{n}(\varrho\eta,-x)
$, maps a function of $n-1$ variables $(w_1,\dots, w_{n-1})$, $f(w_1,\dots, w_{n-1},\sigma)$, to a function of $n+1$ variables, $ z=(z_1,\dots,z_{n+1})$, as indicated in
the above formula. By virtue of \eqref{sigma-perm}
 the layer operator $\widetilde{\boldsymbol \Lambda}_{n+1}$ is an even function of $x$,
\begin{gather*}
\widetilde{ \boldsymbol \Lambda}_{n+1}(\eta,x)= \widetilde{ \boldsymbol \Lambda}_{n+1}(\eta,-x).
\end{gather*}

Let $\omega$ be a map: $\mathbb C^{2n+1}\mapsto \mathbb C^{2n-1}$, defined as
\begin{gather*}
\omega\,\gamma=\omega(\alpha_1,\dots,\alpha_{n+1},\beta_{1},\dots,\beta_n)
=(\alpha_1,\dots,\alpha_n,\beta_{3},\dots,\beta_{n},\alpha_{n+1}).
\end{gather*}
The layer operators~\eqref{openLambda} satisfy the permutation relation~\cite{Derkachov:2003qb}
\begin{gather*}
\widetilde{ \boldsymbol \Lambda}_{n+1}(\gamma,x)\widetilde{ \boldsymbol \Lambda}_{n}(\omega\gamma,x')=
\widetilde{ \boldsymbol \Lambda}_{n+1}(\gamma,x')\widetilde{ \boldsymbol \Lambda}_{n}(\omega\gamma,x)
\end{gather*}
and is nullified by the operator $\widehat{\mathbb B}_N(x)$
\begin{gather}\label{BopenL=0}
\widehat{\mathbb B}_N(x)\widetilde{ \boldsymbol \Lambda}_{N}(\eta_N,x)=0,
\end{gather}
where the vector $\eta_N$ is given by equation~\eqref{gammaNAB}.

Given $ z=(z_1,\dots, z_{N})$, define the function
\begin{gather*}
\Upsilon^N_{p,{x}}( z) = \kappa_N p^{\boldsymbol S-\frac12}
 \Big[ \widetilde{\boldsymbol\Lambda}_N(\gamma_N, x_1)
 \widetilde{\boldsymbol\Lambda}_{N-1}(\omega\gamma_N, x_2)\cdots \widetilde{\boldsymbol\Lambda}_2(\omega^{N-2}\gamma_N, x_{N-1})
 \cdot E_p \Big]( z),
\end{gather*}
where $\boldsymbol S=\sum_{k=1}^N \boldsymbol s_k$, $x=(x_1,\dots,x_{N-1})$ and the normalisation factor is
\begin{gather*}
\kappa_N^{-1}=\bigg(\prod_{k=1}^N\Gamma(s_k+\bar s_k)\bigg)^{1/2} \prod_{1\leq i<j\leq N}\Gamma(s_i+s_j)\,\Gamma(s_i+\bar s_j)=
\varkappa_N \prod_{1\leq i<j\leq N}\Gamma(s_i+s_j).
\end{gather*}
The function $\Upsilon^N_{p, x}$ is a symmetric even function of $x_1,\dots,x_{N-1}$
which is well defined for $|\mathop{\rm Im}(x_k)|\allowbreak <\min_k s_k$.
By virtue of equation~\eqref{BopenL=0}, the function $\Upsilon^N_{p,{x}}$ diagonalizes the operator~$\widehat{\mathbb B}_N$
\begin{gather*}
\widehat{\mathbb B}_N(u)\Upsilon^N_{p,{x}}( z) = p
\big(u^2-x_1^2\big)\cdots \big(u^2-x_{N-1}^2\big)\,\Upsilon^N_{p,{x}}( z).
\end{gather*}
The functions $\Upsilon^N_{p, x}$ are orthogonal to each other for real separated variables, $ x \in (\mathbb{R}^+)^{N-1}$. Namely,
\begin{gather*}
\big(\Upsilon^N_{q,{y}},\Upsilon^N_{p,{x}}\big)_{\mathbb H_N} =
 \delta(p-q)\delta^{N-1}( x,y) \big(\mu_{N-1}^{\mathbb B}( x)\big)^{-1},
\end{gather*}
where
\begin{gather*}
\mu_{N-1}^{\mathbb B}(x)=\frac1{(2\pi)^{N-1}(N-1)!}
\frac {\prod_{j=1}^N\prod_{k=1}^{N-1}|\Gamma(s_j +{\rm i}x_k)\,\Gamma(s_j -{\rm i}x_k)|^{2}}
{\prod_{n=1}^{N-1}|\Gamma(2{\rm i}x_n)|^2\prod_{j< k}
|\Gamma({\rm i}(x_k + x_j))\,\Gamma({\rm i}(x_k - x_j))|^2 }.
\end{gather*}
We are now in position to formulate the theorem:
\begin{Theorem}
Let $\mathbb H_N^{\mathbb B}$ be the Hilbert space
\begin{gather*}
\mathbb H_N^{\mathbb B} =L^2(\mathbb R^+)\otimes
L^2_{\rm sym}\big( (\mathbb R^+)^{N-1}, {\rm d}\mu^{\mathbb B}_{N-1}(x)\big),\qquad
{\rm d}\mu^{\mathbb B}_{N-1}(x)=\mu^{\mathbb B}_{N-1}(x) {\rm d}^{N-1}x.
\end{gather*}
The transformation $\mathrm T_N^{\mathbb B}$ defined for smooth, compactly supported functions $\phi$ on $\mathbb{R}^+\times
(\mathbb{R}^+)^{N-1}$ that are symmetric in respect to the last $N-1$ variables as
\begin{gather*}
\big[ \mathrm T_N^{\mathbb B} \phi\big](z)=\int_{ (\mathbb R^+)^{N}} \phi(p, x)\, \Upsilon^N_{p,{x}}( z)\, \mu_{N-1}^{\mathbb B}( x) \, {\rm d}p\, {\rm d}^{N-1} x
\end{gather*}
extends to a linear isometry from $\mathbb H_N^{\mathbb B}$ into $\mathbb H_N$. In particular, it has unit operator norm \mbox{$\big\|\mathrm T_{N}^{\mathbb B}\big\|=1$} and satisfies
\begin{gather*}
\big\| \mathrm T_N^{\mathbb B} \phi\big\|^2_{ \mathbb H_N } = \| \phi\|^2_{ \mathbb H_N^{\mathbb B} } =
\int_{\mathbb R_+^{N}} |\phi(p, x)|^2\, \mu_{N-1}^{\mathbb B}( x)\, {\rm d}p\,{\rm d}^{N-1} x.
\end{gather*}
\end{Theorem}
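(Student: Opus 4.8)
The plan is to follow, step for step, the argument that established the first theorem of Section~\ref{sect:spinchains}, with the system $\Upsilon^N_{p,x}$ and the open-chain layer operators $\widetilde{\boldsymbol\Lambda}$ of~\eqref{openLambda} now playing the role of $\Psi^N_{p,x}$ and $\boldsymbol\Lambda_n$. As there, the functions $\Upsilon^N_{p,x}$ do not lie in $\mathbb H_N$ for real $p$ and real positive $x$, so the first move is to introduce a regularised family $\Upsilon^{N,\epsilon}_{p,x}$, obtained by shifting the separated variables into the analyticity strip $|\mathop{\rm Im}x_k|<\min_k s_k$ and correspondingly adjusting the index vector $\eta_N$ of~\eqref{gammaNAB} so that every propagator index keeps positive real part. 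One then sets $\mathrm T_N^{\mathbb B,\epsilon}\phi$ to be the transform with kernel $\Upsilon^{N,\epsilon}_{p,x}$ and checks that $\mathrm T_N^{\mathbb B,\epsilon}\phi\to\mathrm T_N^{\mathbb B}\phi$ pointwise as $\epsilon\to0^+$. Here one must ensure that the regularisation is compatible with the evenness of $\Upsilon^N_{p,x}$ in each $x_k$ and with the fact that orthogonality holds only on the positive orthant.

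Next I would invoke Fubini's theorem---justified, exactly as for $\mathrm T_N^B$, by factoring a nonsingular product of $\Gamma$-functions out of the modulus of the momentum-space integrand and reducing the rest to a positive integral---so as to write $\big\|\mathrm T_N^{\mathbb B,\epsilon}\phi\big\|_{\mathbb H_N}^2$ as a double integral over $(p,x)$ and $(p',x')$ weighted by the regularised scalar product $\big(\Upsilon^{N,\epsilon'}_{p',x'},\Upsilon^{N,\epsilon}_{p,x}\big)_{\mathbb H_N}$. This scalar product is evaluated in closed form by the same successive chain integrations over the propagators $D_\alpha$ that produced~\eqref{CNresult}, now supplemented by the extra $\sigma$-integration built into~\eqref{openLambda} and organised through the scalar product~\eqref{s-product-T} of the auxiliary $\Phi$-functions. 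The outcome is a factor $\delta(p-p')$ times a kernel $C_N^{\mathbb B,(\epsilon,\epsilon')}(p,x,x')$ given by a ratio of $\Gamma$-functions of the regulators.

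The crux, and the step I expect to be the main obstacle, is the $\epsilon,\epsilon'\to0^+$ analysis of $C_N^{\mathbb B,(\epsilon,\epsilon')}$ in the sense of distributions against the smooth compactly supported $\phi$, to be relegated to Appendix~\ref{Appendix identite fct delat evoluee}: one must show that this kernel localises $x'=x$ with weight $\big(\mu_{N-1}^{\mathbb B}(x)\big)^{-1}$. Compared with the closed-chain computation there are two genuinely new features. First, $C_N^{\mathbb B,(\epsilon,\epsilon')}$ is invariant under independent sign reversals $x_k\mapsto-x_k$ in addition to permutations, so its singular support as the regulators vanish is the orbit of $\{x'=x\}$ under the full group of signed permutations of the components $x_k$; since $\phi$ is even, symmetric and supported in $(\mathbb R^+)^{N-1}$, every branch of this orbit contributes and precisely rebuilds $\|\phi\|_{\mathbb H_N^{\mathbb B}}^2$. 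Second, this reflection symmetry is responsible for the boundary factors $|\Gamma(2{\rm i}x_n)|^2$ in the denominator of $\mu_{N-1}^{\mathbb B}$, and one must verify that the pinching of the integration contours at both coincident and mutually reflected arguments reproduces exactly this measure, including its vanishing at $x_n=0$. The requisite singular-integral analysis proceeds, as for $\mathrm T_N^B$, by the ``evolved delta-function'' limits of Appendix~\ref{Appendix identite fct delat evoluee}; where a closed-form evaluation is needed it is most naturally supplied by a Gustafson integral of $BC$ type, reflecting the signed-permutation symmetry, in place of the type-$A$ identity~\eqref{gustafson-10} used in the Toda warm-up.

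Once this delta-collapse is in hand the conclusion is immediate and parallels the first theorem. One obtains $\big(\mathrm T_N^{\mathbb B,\epsilon'}\phi,\mathrm T_N^{\mathbb B,\epsilon}\phi\big)_{\mathbb H_N}=K+o(1)$ with $K=\|\phi\|_{\mathbb H_N^{\mathbb B}}^2$; Fatou's theorem applied to the pointwise limit gives $\big\|\mathrm T_N^{\mathbb B}\phi\big\|_{\mathbb H_N}^2\le K$, so $\mathrm T_N^{\mathbb B}\phi\in\mathbb H_N$, while expanding $\big\|\mathrm T_N^{\mathbb B}\phi-\mathrm T_N^{\mathbb B,\epsilon}\phi\big\|_{\mathbb H_N}^2\ge0$ together with $\big(\mathrm T_N^{\mathbb B}\phi,\mathrm T_N^{\mathbb B,\epsilon}\phi\big)_{\mathbb H_N}=K+o(1)$ yields the reverse inequality $K\le\big\|\mathrm T_N^{\mathbb B}\phi\big\|_{\mathbb H_N}^2$. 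Hence $\big\|\mathrm T_N^{\mathbb B}\phi\big\|_{\mathbb H_N}^2=\|\phi\|_{\mathbb H_N^{\mathbb B}}^2$, and the density in $\mathbb H_N^{\mathbb B}$ of the smooth, compactly supported functions that are symmetric in the last $N-1$ variables extends $\mathrm T_N^{\mathbb B}$ to the asserted isometry.
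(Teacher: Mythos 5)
Your proposal is correct and follows essentially the same route as the paper, which in fact gives no separate argument for $\mathrm T_N^{\mathbb B}$ but simply asserts that the proof of the $B_N$ theorem carries over: regularise the kernel, evaluate the regularised scalar product in closed form via the chain integrations (with the extra $\sigma$-integration of~\eqref{openLambda}), take the distributional limit against the smooth compactly supported $\phi$, and conclude with Fatou, the expansion of $\big\|\mathrm T_N^{\mathbb B}\phi-\mathrm T_N^{\mathbb B,\epsilon}\phi\big\|^2\ge 0$, and density. Your additional attention to the signed-permutation symmetry and the $|\Gamma(2{\rm i}x_n)|^2$ factors in $\mu_{N-1}^{\mathbb B}$ correctly identifies the only genuinely new features (the $BC$-type Gustafson integral you mention is what the paper actually deploys later, in the completeness argument of Section~\ref{Section Completude}, rather than here).
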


We will show in Section~\ref{Section Completude} that $\mathrm T_N^{\mathbb B}$ is an unitary operator.

\section{Completeness}
\label{Section Completude}

In the previous section we constructed three systems of functions, $\Psi_{x,p}$, $\Phi_x$ and $\Upsilon_{x,p}$. They allow one to define
the linear operators, $\mathrm T_N^\alpha$, $\alpha=\{B,A,{\mathbb B}\}$, which map the Hilbert spaces $\mathbb H_N^\alpha$ to the
Hilbert space $\mathbb H_N$. For $N=1$ the transformations $\mathrm T_{1}^B$ and $\mathrm T_{1}^{\mathbb B}$ are the Fourier transform and~$\mathrm T_{1}^{A}$ is the Mellin transform. Thus, these transformations are unitary maps and, in particular, $\mathrm T_{1}^\alpha \mathbb
H_{1}^\alpha=\mathbb H_{1}$.

In order to prove the unitarity of the maps $\mathrm T^\alpha_N$ for arbitrary $N$ we use induction on $N$. Namely, we will show
that if the map $\mathrm T_N^{A}$ is unitary 
then the maps $\mathrm T_N^B$, $\mathrm T_N^{\mathbb B}$ and $\mathrm T_{N+1}^B$ are also unitary. We also show that the unitarity of the
map $\mathrm T_{N}^B$ implies the one for~$\mathrm T_{N}^A$. Schematically it is shown on the diagram below
\begin{equation*}
\begin{tikzcd}
\arrow[r] &B_N\arrow[r, yshift=-0.7ex]
&\arrow[l, yshift=0.7ex] A_N \arrow[r] \arrow[d] & B_{N+1}\arrow[r, yshift=-0.7ex] &
\arrow[l, yshift=0.7ex] A_{N+1} \arrow[d] \arrow[r] &{}
\\
&&\mathbb{B}_N && \mathbb{B}_{N+1}.
\end{tikzcd}
\end{equation*}
The backward arrow is dispensable here, but we consider it first because its proof is most transparent and all other proofs follow the same
scheme.

{\sloppy
It was shown in
the previous section that $\mathcal R\big(\mathrm T_N^B\big)$
is a closed subspace of the Hilbert space~$\mathbb H_N$. If this subspace coincides with the whole $\mathbb H_N$ then the orthogonal
complement is trivial, $\mathcal R\big(\mathrm T_N^B \big)^\perp=0$. Since $\mathcal R\big(\mathrm T_N^B \big)^\perp=\ker \big(\mathrm T_N^B\big)^\star$ it is
enough to prove that the kernel of the adjoint operator $\big(\mathrm T_N^B\big)^\star$ is empty. In order to do it let us consider a linear map
from $\mathbb H_N^A$ to $\mathbb H_N^B$ defined~by
\begin{gather}\label{SBA}
\mathrm S_{BA}=\big(\mathrm T_N^B\big)^\star\,\mathrm T_N^A.
\end{gather}
Since the map $\mathrm T_N^A$ is an isometry, by assumption, it maps $ \ker(\mathrm S_{BA})\mapsto \ker \big(\mathrm T_N^B\big)^\star$. Our
immediate aim is to show that $ \ker(\mathrm S_{BA})=0$.}

We prove the following statement:

\begin{Lemma}\label{lemma:SBA}
Let $\mathrm S_{BA}$ be the operator from $\mathbb H_{N}^A$ to $\mathbb H_N^B$ defined in equation~\eqref{SBA}. Then, for any $\chi\in \mathbb
H_{N}^A$ the following holds
\begin{gather}\label{id-SBA}
\|\mathrm S_{BA}\chi\|^2_{\mathbb H_N^B}=\|\chi\|^2_{\mathbb H_N^A}.
\end{gather}
\end{Lemma}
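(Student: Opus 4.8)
The plan is to compute the operator $\mathrm{S}_{BA}=\big(\mathrm T_N^B\big)^\star\,\mathrm T_N^A$ explicitly as an integral transform and then recognise that its norm-preservation is a direct consequence of a Gustafson-type integral evaluation. Concretely, I would first observe that since $\mathrm T_N^A\chi = \int_{\mathbb R^N}\chi(x)\,\Phi^N_x(z)\,\mu_N^A(x)\,{\rm d}^Nx$ lies in $\mathbb H_N$, the action of the adjoint $\big(\mathrm T_N^B\big)^\star$ on it is obtained by pairing with the eigenfunctions $\Psi^N_{p,y}$ against the $\mathbb H_N$ scalar product. Thus, formally,
\begin{gather*}
\big[\mathrm S_{BA}\chi\big](p,y)=\int_{\mathbb R^N}\chi(x)\,\big(\Psi^N_{p,y},\Phi^N_x\big)_{\mathbb H_N}\,\mu_N^A(x)\,{\rm d}^Nx.
\end{gather*}
So the first genuine task is to compute the overlap kernel $\mathcal{K}(p,y;x)=\big(\Psi^N_{p,y},\Phi^N_x\big)_{\mathbb H_N}$ in closed form. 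I expect this to be a product of $\Gamma$-functions times a delta function or power of $p$ fixing the radial variable, computable by the same momentum-space/Feynman-diagram technique used earlier for $C_N^{(\epsilon,\epsilon')}$ and for the scalar product \eqref{s-product-T}; this is the main computational obstacle.

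Once the kernel $\mathcal K$ is known, the identity \eqref{id-SBA} becomes
\begin{gather*}
\|\mathrm S_{BA}\chi\|^2_{\mathbb H_N^B}=\int\big|\big[\mathrm S_{BA}\chi\big](p,y)\big|^2\,\mu_{N-1}^B(y)\,{\rm d}p\,{\rm d}^{N-1}y,
\end{gather*}
and expanding the square produces a triple integral over $x$, $x'$ (from $\chi$ and $\bar\chi$) and the $(p,y)$ integration against $\mu_{N-1}^B$. My plan is to perform the $(p,y)$ integration first. The $p$-integration should collapse via a power/Mellin identity, while the $y$-integration over $\mathbb R^{N-1}$ against the $\Gamma$-function measure $\mu_{N-1}^B$ is precisely of the shape controlled by Gustafson's integral \eqref{gustafson-10} (or its reduced form \eqref{gustafson-1a}). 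Carrying out this Gustafson evaluation should yield a kernel in $(x,x')$ that, after recognising the resulting $\Gamma$-function product, reduces to the reproducing/orthogonality structure and ultimately forces
\begin{gather*}
\int\mathcal{K}(p,y;x)\overline{\mathcal{K}(p,y;x')}\,\mu_{N-1}^B(y)\,{\rm d}p\,{\rm d}^{N-1}y=\frac{\delta^N(x,x')}{\mu_N^A(x)},
\end{gather*}
which upon insertion collapses the double $x,x'$ integral to $\int|\chi(x)|^2\mu_N^A(x)\,{\rm d}^Nx=\|\chi\|^2_{\mathbb H_N^A}$, as desired.

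The delicate points, which I would handle with care rather than formal manipulation, are twofold. First, all the relevant functions are generalised eigenfunctions not lying in $\mathbb H_N$, so the interchange of integrations and the appearance of $\delta$-functions must be justified in the distributional sense; following the pattern of the preceding Theorem, I would introduce the regularisation $x\to x+{\rm i}\epsilon$, $\bar s_N\to\bar s_N^\epsilon$, compute everything for $\epsilon>0$ where absolute convergence holds, and only then take $\epsilon\to0^+$, using Fatou's lemma and the density of smooth compactly supported $\chi$ as in the earlier argument. Second, and this is where I expect the real work, the Gustafson integral \eqref{gustafson-10} requires $\operatorname{Re}(\alpha_k),\operatorname{Re}(\beta_k)>0$, whereas the naive limit places the parameters on the boundary of the domain of validity; so the singular boundary behaviour of the integral must be extracted precisely, producing the $\delta^N(x,x')$ together with the exact $\mu_N^A(x)^{-1}$ prefactor. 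Matching this prefactor against the definitions of $\mu_N^A$, $\mu_{N-1}^B$ and the normalisation constants $\varkappa_N$, $\kappa_N$ is the crux of the identity \eqref{id-SBA}, and I would verify it by a careful bookkeeping of the $\Gamma$-function products emerging from Gustafson's formula.
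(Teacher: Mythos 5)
Your plan coincides with the paper's own proof: the overlap kernel is indeed available in closed form (equation~\eqref{BtoA}, quoted from the earlier literature rather than recomputed), the $(p,y)$-integration is performed first with the $y$-integral evaluated by Gustafson's theorem in the form~\eqref{Gustafson-1}, and the resulting $(x,x')$ kernel --- which is precisely the scalar product $\big(\Phi^N_{\sigma,x'+{\rm i}\epsilon},\Phi^N_{\sigma,x+{\rm i}\epsilon}\big)_{\mathbb H_N}$ of~\eqref{s-product-T} --- collapses distributionally to $\delta^N(x,x')\big(\mu^A_N(x)\big)^{-1}$ exactly as in~\eqref{PhiAort} and Appendix~\ref{Appendix identite fct delat evoluee}. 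The one detail to adjust is the regularisation: what is needed here is the deformation of the $A$-family, $\Phi^N_x\to\Phi^N_{\sigma,x+{\rm i}\epsilon}$ with ${\rm Im}\,\sigma>0$ (not the $\bar s_N\to\bar s_N^{\epsilon}$ deformation of the $\Psi^N_{p,x}$ used in the preceding theorem), since it is the factor ${\rm e}^{-{\rm i}p\bar\sigma}$ in~\eqref{BtoA} that renders the $p$-integral absolutely convergent and legitimises the interchange of integrations.
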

%

\begin{proof}
First, we calculate the action of the operator $\mathrm S_{BA}$ on the space of smooth functions with a~com\-pact support, $\chi(x)$. The
action of $\mathrm T_N^A$ on a function $\chi(x)$ is given by equation~\eqref{A-map}. In~full similarity with the construction in
Section~\ref{sect:B-eigenfunctions}, we define the regularized function~$\mathrm T_N^{A,\sigma,\epsilon}\chi$ obtained by replacing
$\Phi^N_{x}$ in \eqref{A-map} by $\Phi^N_{\sigma,x+{\rm i}\epsilon}$, where $x=(x_1,\dots,x_N)$ and $\epsilon=(\epsilon_1,\dots,\epsilon_N)$,
all~$\epsilon_k>0$,
\begin{gather}\label{A-map-reg}
\mathrm T_N^{A,\sigma,\epsilon}\chi(z)=\int_{\mathbb R^{N}} \chi( x)\, \Phi^N_{{\sigma,x+{\rm i}\epsilon}}( z)\, {\rm d}\mu_{N}^{A}( x).
\end{gather}
As $\sigma,\epsilon\to 0$ one has that $\big\|\mathrm T_N^{A,\sigma,\epsilon}\chi - \mathrm T_N^{A}\chi \big\|_{\mathbb H_N} \to 0$.
Since $\|\mathrm T_N^B\|=1$, the adjoint to $\mathrm T_N^B$ is a bounded operator which acts on a vector $\Psi$ by projecting it on the
eigenfunction $\Psi^N_{p,y}$, see equation~\eqref{Bclosed-map},
\begin{gather*}
\big(\mathrm T_N^B\big)^\star \Psi =\big(\Psi^N_{p,y},\Psi\big)_{\mathbb H_N} \equiv \varphi(p,y).
\end{gather*}
Thus we write
\begin{align*}
\varphi(p,y)&\equiv [\mathrm S_{BA}\chi](p,y)= \big[ \big(\mathrm T_N^B\big)^\star T_N^{A}\chi \big](p,y)
=\lim_{\sigma,\epsilon\to 0} \big[ \big(\mathrm T_N^B\big)^\star \mathrm T_N^{A,\sigma,\epsilon}\chi \big](p,y)
\\
& =\lim_{\sigma,\epsilon\to 0}\big( \Psi^N_{p,y} , T_N^{A,\sigma,\epsilon}\chi \big)_{\mathbb H_N}.
\end{align*}
Moreover, one has
\begin{gather}\label{varphi-def-and-norm}
\|\varphi\|^2_{\mathbb H_N^{B}}=\lim_{\sigma\to 0}\lim_{\epsilon\to0^+}\|\varphi_{\sigma,\epsilon}\|^2_{\mathbb H_N^{B}}, \qquad
\varphi_{\sigma,\epsilon}(p,y) = \big( \Psi^N_{p,y},T_N^{A,\sigma,\epsilon}\chi \big)_{\mathbb H_N}.
\end{gather}

The further analysis depends on the remarkable fact that the scalar product of the func\-ti\-ons~$\Psi^N_{p,y}$ and
$\Phi^N_{\sigma,x+{\rm i}\epsilon}$ can be obtained in a closed form~\cite{Belitsky:2014rba}:
\begin{gather}
\big(\Psi^N_{p, y},\Phi^N_{\sigma,x+{\rm i}\epsilon}\big)_{\mathbb H_N} \nonumber
\\ \qquad
{}=p^{-1/2 -{\rm i}\Xi-{\rm i}X+\mathcal E}{\rm e}^{-{\rm i}p\bar\sigma}\frac{
\prod_{k=1}^N\prod_{j=1}^{N-1} \Gamma({\rm i}(y_j-x_k)+\epsilon_k))
}{\prod_{j=1}^N\left(\prod_{k=1}^{N-1}\Gamma( \bar s_j+{\rm i}y_k)\prod_{k=1}^{N}\Gamma( s_j-{\rm i}x_k+\epsilon_k)\right)},\label{BtoA}
\end{gather}
where $X=\sum_{k=1}^N x_k$ and $\Xi=\sum_{k=1}^N \xi_k$, and $\mathcal E=\sum_{k=1}^N \epsilon_k$. That is
\begin{gather}
\varphi_{\sigma,\epsilon}(p,y)=\frac{p^{-1/2 -{\rm i}\Xi+\mathcal E}{\rm e}^{-{\rm i}p\bar\sigma}}{\prod_{j=1}^N\prod_{k=1}^{N-1}\Gamma( \bar s_j+{\rm i}y_k)}\nonumber
\\ \hphantom{\varphi_{\sigma,\epsilon}(p,y)=}
{}\times\int_{\mathbb R^N}\frac{
\prod_{k=1}^N\prod_{j=1}^{N-1} \Gamma({\rm i}(y_j-x_k)+\epsilon_k)}{\prod_{k,j=1}^{N}\Gamma( s_j-{\rm i}x_k+\epsilon_k)}
 p^{-{\rm i}X} \chi(x)\,{\rm d}\mu_N^A(x). \label{varphi-reg}
\end{gather}
By assumption the function $\chi$ is nonzero only in a compact region. Therefore the function $\varphi_{\sigma,\epsilon}(p,y)$ grows no
faster that some power of $y$ for large $y$ while at large $p$ it decays exponentially fast $\sim \exp\{-\mathop{\rm Im}(\sigma)p\}$). Taking
into account that the measure $\mu_{N-1}^{B}(y)$ decays exponentially fast for large $y$
\begin{gather*}
\mu^B_{N-1}(y) \simeq \frac{(4\pi)^{\frac{N(N-1)}2} }{2^{N-1} (N-1)!}
\prod_{1\leq i<j\leq N-1} {y_{ij}}\sinh\pi y_{ij}\, \prod_{j=1}^N\prod_{k=1}^{N-1} y_k^{2\boldsymbol s_j-1} {\rm e}^{-\pi |y_k+\xi_j|},
\end{gather*}
one concludes that the normalisation integral for $\varphi_{\sigma,\epsilon}$ converges
\begin{gather*}
\|\varphi_{\sigma,\epsilon}\|^2_{\mathbb H_N^B}=\int_{\mathbb R_+}\int_{\mathbb R^{N-1}}
 |\varphi_{\sigma,\epsilon}(p,y)|^2 \,{\rm d}p \,{\rm d}\mu_{N-1}^{B}(y) < \infty.
\end{gather*}
Moreover, substituting the expression for $\varphi_{\sigma,\epsilon}$, equation~\eqref{varphi-reg}, one can change the order of integration and
integrate first over $p$ and $y$. The momentum integral is trivial and produces the factor $\Gamma({\rm i}(X'-X)+2\mathcal E) (2
\mathop{\rm Im}\sigma)^{{\rm i}(X-X')-2\mathcal E}$, while the integral over $y$ can be calculated in
 closed form~\cite[Theorem 5.1]{MR1139492}, see also equation~\eqref{gustafson-10}. Namely,
\begin{gather}
\frac1{(N-1)!}\int_{\mathbb R^N}\frac{\prod_{k=1}^N\prod_{j=1}^{N-1} \Gamma({\rm i}(y_j-x_k)+\epsilon_k)\,\Gamma({\rm i}(x'_k-y_j)+\epsilon_k)}
{\prod_{j < k}\Gamma({\rm i}(y_k-y_j))\,\Gamma({\rm i}(y_j-y_k))}
\prod_{m=1}^{N-1} \frac{{\rm d}y_m}{2\pi}\nonumber
\\ \hphantom{\frac1{(N-1)!}}
{}=\frac{\prod_{k,j=1}^{N} \Gamma({\rm i}(x'_k-x_j)+\epsilon_k+\epsilon_j)}{\Gamma({\rm i}(X'-X)+2\mathcal E)}.\label{Gustafson-1}
\end{gather}
Thus we get for the norm of $\varphi_{\sigma,\epsilon}$
\begin{gather}
\|\varphi_{\sigma,\epsilon}\|^2_{\mathbb H_N^B} =\int_{\mathbb R^{2N}}
\left((2\mathop{\rm Im}\sigma)^{{\rm i}(X-X')-2\mathcal E}\frac{\prod_{k,j=1}^{N} \Gamma({\rm i}(x'_k-x_j)+\epsilon_k+\epsilon_j)}{\prod_{k,j=1}^{N}
\Gamma\left(\bar s_j+{\rm i}x'_k+\epsilon_k\right)\,\Gamma( s_j-{\rm i}x_k+\epsilon_k)}\right)\notag
\\ \hphantom{\|\varphi_{\sigma,\epsilon}\|^2_{\mathbb H_N^B} =}
{}\times \chi(x)(\chi(x'))^* \,{\rm d}\mu_N^A(x)\,{\rm d}\mu_N^A(x').\label{varphi_norm_final}
\end{gather}
Note that the expression in the bracket is nothing else as the scalar product,
$\big(\Phi^N_{\sigma,x'+{\rm i}\epsilon},\Phi^N_{\sigma,x+{\rm i}\epsilon}\big)_{ \mathbb H_N }$, see equation~\eqref{s-product-T}. Finally, taking into account~\eqref{PhiAort}, see also Appendix~\ref{Appendix identite fct delat evoluee}, we obtain that for any smooth function $\chi$ with a
compact support
\begin{gather*}
\|\varphi\|^2_{\mathbb H_N^B}=\lim_{\sigma\to 0}\lim_{\epsilon\to 0^+}\|\varphi_{\sigma,\epsilon}\|^2_{ \mathbb H_N^B }
=\|\chi\|^2_{\mathbb H_N^A} =\int_{\mathbb R^{N}}\, |\chi( x)|^2\,{\rm d}\mu_{N}^A(x)
\end{gather*}
or
\begin{gather*}
\|\mathrm S_{BA}\chi\|^2_{\mathbb H_N^B}=\|\chi\|^2_{\mathbb H_N^A}.
\end{gather*}
Since the space of smooth, compactly supported functions is dense in $\mathbb H_N^A$ this equation holds on the whole Hilbert space.
\end{proof}

The identity~\eqref{id-SBA} implies that $\ker \mathrm S_{BA}=0$ and hence $\mathcal R\big(\mathrm T_N^B\big)=\mathbb H_N$, which guarantees the
unitarity of the map $\mathrm T_N^B$.

The proof of the unitarity of the maps $\mathrm T_N^{\mathbb B}$ and $\mathrm T_{N+1}^B$ follows the same lines and is based on the
following result:

\begin{Lemma}\label{lemma:SBBA}
Let $\mathrm S_{\mathbb B A}$ and $\mathrm S_{B}$ be maps from $\mathbb H_N^A \mapsto \mathbb H^{\mathbb B}_N$ and
$\mathbb H^{A}_N\otimes \mathbb H_{1}^A \mapsto \mathbb H^{B}_{N+1}$ defined as follows
\begin{gather*}
\mathrm S_{\mathbb B A}=\big(\mathrm T_N^{\mathbb B}\big)^{\star} \mathrm T_N^{A}\qquad \text{and}\qquad
\mathrm S_{B}=\big(\mathrm T_{N+1}^{B}\big)^{\star}\,
\big(\mathrm T_{N}^A\otimes \mathrm T_1^A\big).
\end{gather*}
Provided the map $\mathrm T_N^A\colon \mathbb H^{A}_N \mapsto \mathbb H_N$ is unitary the following identities,
\begin{gather*}
\|\mathrm S_{\mathbb B A}\chi\|^2_{\mathbb H_N^{\mathbb B}}=\|\chi\|^2_{\mathbb H_N^A}
\qquad \text{and} \qquad
\|\mathrm S_{B }\chi'\|^2_{\mathbb H_{N+1}^{ B}}=\|\chi'\|^2_{\mathbb H_N^A\otimes \mathbb H_1^A},
\end{gather*}
hold for any $\chi\in\mathbb H_N^A$, $\chi'\in \mathbb H_N^A\otimes \mathbb H_1^A$.
\end{Lemma}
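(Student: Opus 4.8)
The plan is to establish both identities by following the strategy of Lemma~\ref{lemma:SBA} step for step: reduce the action of each composite operator to an explicit regularised integral transform, compute the resulting norm by integrating out the separated variables of the target space in closed form by means of a Gustafson integral, and then recognise the outcome as a scalar product of $A$-type eigenfunctions so as to collapse the argument via the orthogonality relation~\eqref{PhiAort}. In both cases the density of smooth, compactly supported functions then extends the identity from this dense class to the whole Hilbert space.

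Concretely, for $\mathrm S_{\mathbb B A}=\big(\mathrm T_N^{\mathbb B}\big)^\star \mathrm T_N^A$ I would first note that, since $\big\|\mathrm T_N^{\mathbb B}\big\|=1$, the adjoint $\big(\mathrm T_N^{\mathbb B}\big)^\star$ acts by projecting onto the eigenfunctions $\Upsilon^N_{p,y}$. Applying it to the regularised image $\mathrm T_N^{A,\sigma,\epsilon}\chi$ from~\eqref{A-map-reg} yields $\phi_{\sigma,\epsilon}(p,y)=\big(\Upsilon^N_{p,y},\mathrm T_N^{A,\sigma,\epsilon}\chi\big)_{\mathbb H_N}$. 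The crucial input is the closed-form evaluation of $\big(\Upsilon^N_{p,y},\Phi^N_{\sigma,x+{\rm i}\epsilon}\big)_{\mathbb H_N}$, the open-chain analogue of~\eqref{BtoA}, obtained from the integral identities for the propagators $D_\alpha$. With this in hand I would write $\big\|\mathrm S_{\mathbb B A}\chi\big\|^2_{\mathbb H_N^{\mathbb B}}$ as a double integral over the source variables $x,x'$, interchange orders of integration (justified, as in Lemma~\ref{lemma:SBA}, by the exponential decay of $\mu^{\mathbb B}_{N-1}$ and of the regulator ${\rm e}^{-{\rm i}p\bar\sigma}$), integrate out $p$, and perform the integral over the $\mathbb B$-separated variables $y\in(\mathbb R^+)^{N-1}$. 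Because $\mu^{\mathbb B}_{N-1}$ carries the $\Gamma$-structure of a $C_{N-1}$ (respectively $BC_{N-1}$) root system, this last step requires the corresponding $C$-type Gustafson integral from~\cite{MR1139492} rather than the type-$A$ identity~\eqref{gustafson-10}. The result is manifestly $\big(\Phi^N_{\sigma,x'+{\rm i}\epsilon},\Phi^N_{\sigma,x+{\rm i}\epsilon}\big)_{\mathbb H_N}$ as in~\eqref{s-product-T}, so sending $\sigma,\epsilon\to0^+$ and invoking~\eqref{PhiAort} collapses the $x,x'$ integrals and produces $\|\chi\|^2_{\mathbb H_N^A}$.

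For $\mathrm S_{B}=\big(\mathrm T_{N+1}^B\big)^\star\big(\mathrm T_N^A\otimes\mathrm T_1^A\big)$ the scheme is identical, now with a target carrying the $B_{N+1}$ structure. Here $\big(\mathrm T_{N+1}^B\big)^\star$ projects onto $\Psi^{N+1}_{q,y}$ and the required closed form is $\big(\Psi^{N+1}_{q,y},\Phi^N_{\sigma,x+{\rm i}\epsilon}\otimes\Phi^1_{x_{N+1}}\big)_{\mathbb H_{N+1}}$, namely the overlap of an $(N+1)$-site $B$-eigenfunction with a factorised $A_N\otimes A_1$ eigenfunction; this again reduces to a product of $\Gamma$-functions by the same propagator manipulations that give~\eqref{BtoA}. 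Computing the $\mathbb H^{B}_{N+1}$-norm, the momentum integral over $q$ is again elementary, while the integral over the $N$ separated variables $y\in\mathbb R^N$ of $B_{N+1}$ is an instance of the rank-$N$ type-$A$ Gustafson integral~\eqref{Gustafson-1}/\eqref{gustafson-10}. One is then left with the scalar product of two factorised $A_N\otimes A_1$ eigenfunctions, and the orthogonality~\eqref{PhiAort} applied to both factors (the one-site factor being governed by the $N=1$ Mellin orthogonality) reduces the norm to $\|\chi'\|^2_{\mathbb H_N^A\otimes\mathbb H_1^A}$.

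The main obstacle is twofold. First, one must derive the two closed-form overlaps $\big(\Upsilon^N_{p,y},\Phi^N_{\sigma,x+{\rm i}\epsilon}\big)$ and $\big(\Psi^{N+1}_{q,y},\Phi^N_{\sigma,x+{\rm i}\epsilon}\otimes\Phi^1_{x_{N+1}}\big)$; although structurally parallel to~\eqref{BtoA}, the open-chain overlap additionally involves the $\sigma$-integration built into $\widetilde{\boldsymbol\Lambda}$ in~\eqref{openLambda} and is therefore more delicate to bring to a pure product of $\Gamma$-functions. Second, and most importantly, the evaluation of $\big\|\mathrm S_{\mathbb B A}\chi\big\|^2$ hinges on the availability and correct application of the $C$/$BC$-type Gustafson integral matching the measure $\mu^{\mathbb B}_{N-1}$: identifying the precise parameter specialisation (including the $\epsilon$-shifts) that turns that integral into the $\Phi$-$\Phi$ scalar product, and controlling the regularised $\sigma,\epsilon\to0^+$ limits in the sense of distributions, is where the genuine work lies.
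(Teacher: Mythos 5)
Your proposal follows essentially the same route as the paper's proof: regularise via $\mathrm T_N^{A,\sigma,\epsilon}$, use the closed-form overlaps $\big(\Upsilon^N_{p,y},\Phi^N_{\sigma,x+{\rm i}\epsilon}\big)$ and $\big(\Psi^{N+1}_{p,\underline y},\Phi^N_{\sigma,x+{\rm i}\epsilon}\otimes\Phi^1_{\sigma,x_{N+1}+{\rm i}\epsilon_{N+1}}\big)$, integrate out $p$ and the separated variables with the $BC$-type Gustafson integral (Theorem~9.3 of~\cite{MR1139492}, after extending the $y$-integration to $\mathbb R^{N-1}$ by evenness) for $\mathrm S_{\mathbb B A}$ and the type-$A$ integral~\eqref{Gustafson-1} for $\mathrm S_B$, reduce to the $\Phi$--$\Phi$ scalar product of~\eqref{varphi_norm_final}, and conclude by~\eqref{PhiAort} and density. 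You also correctly locate the genuine technical content in the two closed-form overlaps and the matching of the $\mathbb B$-measure to the $BC$-type integral, which is exactly where the paper leans on~\cite{DMspinchainandSL2RGustafson} and Gustafson's second theorem.
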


\begin{proof}
In the proof of these assertions, the main difference from the proof of Lemma~\ref{lemma:SBA} lies in the type of the $\Gamma$-integrals
arising in the process. We briefly discuss these differences below.

For the $S_{\mathbb B A}$ operator the problem is reduced to calculating the norm of the function
\begin{gather*}
\phi_{\sigma,\epsilon}(p,y)= \big( \Upsilon^N_{p,y},T_N^{A,\sigma,\epsilon}\chi \big)_{ \mathbb{H}_N },
\end{gather*}
which is an analogue of the function $\varphi_{\sigma,\epsilon}$, see equation~\eqref{varphi-def-and-norm}. The relevant scalar product takes the
form\footnote{For the homogeneous chain this scalar product was calculated in~\cite{DMspinchainandSL2RGustafson} and its extension to the
general case is straightforward.}
\begin{gather*}
(\Upsilon^N_{p,y},\Phi^N_{\sigma,x+{\rm i}\epsilon})_{ \mathbb{H}_N }
= p^{-1/2 -{\rm i}\Xi-{\rm i}X+\mathcal E}{\rm e}^{-{\rm i}p\bar\sigma}
\frac1{ \prod_{1\leq k<j\leq N}\Gamma(-{\rm i}( x_k+ x_j)+\epsilon_k+\epsilon_j) }
\\ \hphantom{(\Upsilon^N_{p,y},\Phi^N_{\sigma,x+{\rm i}\epsilon})_{ \mathbb{H}_N }=}
{}\times \frac{\prod_{k=1}^{N}\prod_{j=1}^{N-1} \Gamma(-{\rm i}(x_k\pm y_j)+\epsilon_k)}{
\prod_{k=1}^N\left(\prod_{j=1}^{N-1} \Gamma(\bar s_k\pm {\rm i}y_j)\right) \left(\prod_{m=1}^{N}\Gamma( s_k-{\rm i}x_m+\epsilon_m)\right)}.
\end{gather*}
 Above, the symbol $\pm$ stands for
\begin{gather*}
f(a \pm b) \equiv f(a+b)f(a-b) .
\end{gather*}
Calculating the norm
\begin{gather*}
\|\phi_{\sigma,\epsilon}\|^2_{\mathbb H_N^{\mathbb B}} =\int_{\mathbb R_+^N}
 |\phi_{\sigma,\epsilon}(p,y)|^2 \,{\rm d}p \,{\rm d}\mu_{N-1}^{\mathbb B}(y)
\end{gather*}
one substitutes the function $\phi_{\sigma,\epsilon}$ in the form
\begin{gather*}
\phi_{\sigma,\epsilon}(p,y)=\int_{\mathbb R_+^N} \big( \Upsilon_{p,y},\Phi^N_{\sigma,x+{\rm i}\epsilon}\big)_{ \mathbb{H}_N } \chi(x) \,{\rm d}p\,{\rm d}\mu_N^A(x).
\end{gather*}
One can change the order of integrations and take the integral over $p$ and $y$ first. The integral over $p$ is exactly the same while the
other integral takes the form of second Gustafson integral~\cite[Theorem 9.3]{MR1139492},
\begin{gather*}
\frac1{(N-1)!}\int_{\mathbb R^N}\frac{\prod_{k=1}^N\prod_{j=1}^{N-1} \Gamma({\rm i}(\pm y_j- x_k)+\epsilon_k)\,\Gamma({\rm i}( x'_k\pm y_j)+\epsilon_k)}
{\prod_{m=1}^{N-1}\Gamma(\pm 2{\rm i} y_m) \prod_{j < k}\Gamma({\rm i}(y_k \pm y_j))\,\Gamma(-{\rm i}(y_k\pm y_j))}
\prod_{m=1}^{N-1} \frac{{\rm d}y_m}{4\pi}
\\ \qquad
{}=\frac{\prod_{k,j=1}^{N} \Gamma({\rm i}(x'_k-x_j)+\epsilon_{kj})\prod_{1\leq m<n\leq N}
\Gamma({\rm i}(x'_n+x'_m)+\epsilon_{nm}) \Gamma(-{\rm i}(x_n+x_m)+\epsilon_{nm})
 }{\Gamma({\rm i}(X'-X)+2\mathcal E)},
\end{gather*}
where $\epsilon_{kj}=\epsilon_j+\epsilon_k$.
We also extended the integral over $y_k$ from the positive half-axis to the real line using
the symmetry of the integrand with respect to the reflection $y_k\to -y_k$. Finally, collecting all factors, one finds that the norm
$\|\phi_{\sigma,\epsilon}\|^2_{\mathbb H_N^{\mathbb B}}$ is given by the expression on the r.h.s.\ of equation~\eqref{varphi_norm_final}.
Repeating all the same arguments as in the previous case we conclude that{\samepage
\begin{gather*}
\|\mathrm S_{\mathbb B A}\chi\|^2_{\mathbb H_N^{\mathbb B}}=\|\chi\|^2_{\mathbb H_{N}^A}
\end{gather*}
for any $\chi\in\mathbb H_{N}^A$.}

Now let us show that the map $\mathrm T_{N+1}^B$ is unitary. In this case we consider the map
\begin{gather*}
\mathrm S_{B}=\big(\mathrm T_{N+1}^{B}\big)^{\star}\big(\mathrm T_{N}^A\otimes \mathrm T_1^A\big).
\end{gather*}
The last factor in the above equation is the unitary map from $\mathbb H_N^A\otimes \mathbb H_1^A$ to $\mathbb H_{N+1}=\mathbb
H_N\otimes\mathcal H_{N+1}$, where $\mathcal H_{N+1}$ is the Hilbert space of holomorphic functions in the upper complex half-plane
discussed around equation~\eqref{scalar-product}. Namely, similar to equation~\eqref{A-map-reg} we define
\begin{gather*}
\Psi^{\sigma,\underline{\epsilon}}_{\chi}(\underline{z})
=\int_{\mathbb R^{N+1}} \chi(\underline{x})\, \Phi^N_{{\sigma,x+{\rm i}\epsilon}}(z)\Phi^1_{{\sigma,x_{N+1}+{\rm i}\epsilon_{N+1}}}(z_{N+1})
\, {\rm d}\mu_{N}^{A}( x)\, {\rm d}\mu_1^A(x_{N+1}).
\end{gather*}
Here $z$, $x$, $\epsilon$ and $\underline z$, $\underline x$, $\underline \epsilon$ are $N$ and ($N+1)$-dimensional vectors, respectively, e.g.,
$x=(x_1,\dots,x_N)$, $\underline x=(\underline x_1,\dots,\underline x_{N+1})$, etc. Note also that the parameter $\sigma$ is the
same for the functions $\Phi^N$ and~$\Phi^1$. Completely similar to the previous consideration one can show that
\begin{gather*}
\lim_{\sigma\to0} \lim_{\underline{\epsilon}\to 0^+} \Psi^{\sigma,\underline{\epsilon}}_{\chi}
 =\Psi_{\chi} \equiv \big(\mathrm T_{N}^A\otimes \mathrm T_1^A\big)\chi.
\end{gather*}
Again we define the function
\begin{gather*}
\varphi_{\sigma,\underline \epsilon}(p,\underline y)
=\big(\Psi^N_{p,\underline y},\Psi^{\sigma,\underline\epsilon}_\chi\big)
\equiv \big(\mathrm T_{N+1}^{B}\big)^{\star} \Psi^{\sigma,\underline{\epsilon}}_{\chi},
\end{gather*}
where $\underline y =(y_1,\dots,y_N)$. The scalar product of the function $\Psi^N_{p,\underline y}$ and $\Phi^N_{{\sigma,x+{\rm i}\epsilon}}\otimes\Phi^1_{\sigma,x_{N+1}+{\rm i}\epsilon_{N+1}}$ takes the form (see, e.g.,~\cite{DMspinchainandSL2RGustafson})
\begin{gather*}
\big(\Psi^N_{p,\underline y},\Phi^N_{{\sigma,x}}\otimes\Phi^1_{\sigma,x_{N+1}}\big)_{\mathbb H_N} =
\frac1{\sqrt{p}} p^{-{\rm i}\underline X-{\rm i}\Xi +{\rm i}\xi_{N+1}} {\rm e}^{-{\rm i}p\bar\sigma}
\prod_{k=1}^N \frac{\Gamma(\bar s_k+s_{N+1})}{\Gamma(s_k+\bar s_{N+1})}
\frac1{\Gamma(s_{N+1}-{\rm i}x_{N+1})}
\\ \hphantom{\big(\Psi^N_{p,\underline y},\Phi^N_{{\sigma,x}}\otimes\Phi^1_{\sigma,x_{N+1}}\big)_{\mathbb H_N} =}
{}\times \prod_{k,j=1}^N\frac{\Gamma({\rm i}(y_j-x_k)
)}{\Gamma(\bar s_k+{\rm i}y_j)\,\Gamma(s_j-{\rm i}x_k)}
\\ \hphantom{\big(\Psi^N_{p,\underline y},\Phi^N_{{\sigma,x}}\otimes\Phi^1_{\sigma,x_{N+1}}\big)_{\mathbb H_N} =}
{}\times \prod_{k=1}^N\frac1{\Gamma(s_{N+1}-{\rm i}y_k)} \frac{ \Gamma(-{\rm i}(y_k+x_{N+1}) )}{\Gamma(-{\rm i}(x_k+x_{N+1}))}.
\end{gather*}
Calculating the norm of $\varphi_{\sigma,\underline \epsilon}$ we change the order of integration and first take the integral over~$y$. It~takes the form of $N$-fold Gustafson's integral~~\cite[Theorem 5.1]{MR1139492} that we encountered earlier, see equation~\eqref{Gustafson-1}.
After some algebra we obtain
\begin{align*}
\|\varphi_{\sigma,\underline \epsilon}\|^2_{\mathbb H_{N+1}^B} & = \int_{\mathbb R^{2N+2}}
\biggl((2\mathop{\rm Im}\sigma)^{{\rm i}(\underline X-\underline X')-2\underline{\mathcal E}}
\frac{\prod_{k,j=1}^{N} \Gamma\big({\rm i}(x'_k-x_j)+\epsilon_k+\epsilon_j\big)}{\prod_{k,j=1}^{N}
\Gamma( \bar s_j+{\rm i}x'_k+\epsilon_k)\,\Gamma( s_j-{\rm i}x_k+\epsilon_k)}
\\
&\quad
\times \frac{\Gamma\big({\rm i}(x'_{N+1}-x_{N+1})+2\epsilon_{N+1}\big)}{
\Gamma\big(\bar s_{N+1}+{\rm i}x'_{N+1}+\epsilon_{N+1}\big)\,\Gamma( s_{N+1}-{\rm i}x_{N+1}+\epsilon_{N+1})}\,\biggr)\chi(\underline x)\, \big(\chi(\underline x')\big)^*
\\
&\quad \times {\rm d}\mu_{N}^A(x)\,{\rm d}\mu_{N}^A(x')\, {\rm d}\mu_{1}^A(x_{N+1})\,{\rm d}\mu_{1}^A(x'_{N+1}).
\end{align*}
The analysis of the above expression in the limit $\underline \epsilon\to 0^+$ and $\sigma \to0$ is exactly the same as before, see
Appendix~\ref{Appendix identite fct delat evoluee}. It~results in the following expression for the norm
\begin{gather*}
\|\varphi\|^2_{\mathbb H_{N+1}^B} =\|\mathrm S_B\chi\|^2_{\mathbb H_{N+1}^B} =\|\chi\|^2_{\mathbb H_{N}^A\otimes \mathbb H_1^A}
=\int_{\mathbb R^{N+1}}\, |\chi(\underline x)|^2\,{\rm d}\mu_{N}^A(x)\, {\rm d}\mu_{1}^A(x_{N+1}),
\end{gather*}
that completes the proof of the lemma.
\end{proof}

It follows from Lemma~\ref{lemma:SBBA} $\ker \mathrm S_{\mathbb BA}=0$ and $\ker \mathrm S_B=0$ and, hence, that the operators $\mathrm
T_{N}^{\mathbb B}\colon \mathbb H_{N}^{\mathbb B}\mapsto \mathbb H_{N}$ and $\mathrm T_{N+1}^B\colon \mathbb H_{N+1}^B\mapsto \mathbb H_{N+1}$ are
unitary provided that $\mathrm T_{N}^A$ is.

The final step required to complete the induction on $N$ is to show that the unitarity of the map $\mathrm T_N^A$ follows from that
of the map~$\mathrm T_N^B$. As it was argued earlier in this section in order to prove this statement it is enough to show that the kernel of the operator
\begin{gather}\label{SAB}
\mathrm S_{AB}=\big(\mathrm T_N^A\big)^{\star} \,\mathrm T_N^B, \qquad
\mathrm S_{AB}\colon\ \mathbb H_N^B\mapsto \mathbb H_N^A
\end{gather}
is trivial.

\begin{Lemma}
Let $\mathrm S_{AB}$ be the linear operator defined in equation~\eqref{SAB}. If the map $\mathrm T_N^B\colon \mathbb H_N^{B}\mapsto \mathbb H_N$
is unitary then for any $\varphi\in \mathbb H_N^B$ the following identity holds: $\|\mathrm S_{AB}\varphi\|^2_{\mathbb
H_N^{A}}=\|\varphi\|^2_{\mathbb H_N^B}$.
\label{Lemme isometrie operateur SAB}
\end{Lemma}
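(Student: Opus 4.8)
The plan is to transcribe, \emph{mutatis mutandis}, the proof of Lemma~\ref{lemma:SBA} with the roles of the $A$- and $B$-systems interchanged. Since smooth, compactly supported functions are dense in $\mathbb H_N^B$, it is enough to establish $\|\mathrm S_{AB}\varphi\|^2_{\mathbb H_N^A}=\|\varphi\|^2_{\mathbb H_N^B}$ for such $\varphi$ and then pass to the whole space by continuity. Because $\mathrm T_N^A$ is an isometry, $\|\mathrm T_N^A\|=1$, its adjoint $(\mathrm T_N^A)^\star$ is bounded and, dually to the previous lemma, acts on a vector of $\mathbb H_N$ by projecting it onto the eigenfunction $\Phi^N_y$; hence $\chi(y)\equiv[\mathrm S_{AB}\varphi](y)=\big(\Phi^N_y,\mathrm T_N^B\varphi\big)_{\mathbb H_N}$. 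As in Section~\ref{sect:B-eigenfunctions} I would regularise, replacing $\Phi^N_y$ by $\Phi^N_{\sigma,y+{\rm i}\epsilon}$ and $\Psi^N_{p,x}$ by $\Psi^{N,\epsilon}_{p,x}$, set $\chi_{\sigma,\epsilon}(y)=\big(\Phi^N_{\sigma,y+{\rm i}\epsilon},\mathrm T_N^{B,\epsilon}\varphi\big)_{\mathbb H_N}$, and recover $\chi$ as $\sigma,\epsilon\to 0^+$. The only analytic input is then the closed-form scalar product \eqref{BtoA} of $\Psi^N_{p,x}$ with $\Phi^N_{\sigma,y}$, which turns $\chi_{\sigma,\epsilon}(y)$ into an explicit integral over $(p,x)$ of a ratio of $\Gamma$-functions.

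Next I would form $\|\chi_{\sigma,\epsilon}\|^2_{\mathbb H_N^A}=\int_{\mathbb R^N}|\chi_{\sigma,\epsilon}(y)|^2\,{\rm d}\mu_N^A(y)$ and, justifying Fubini exactly as in the isometry proof of Section~\ref{sect:B-eigenfunctions}, carry out the integration over the $N$ separated variables $y$ first. Two features make this tractable. First, the spin-dependent factors $\prod_{j,k}\Gamma(s_j-{\rm i}y_k)\Gamma(\bar s_j+{\rm i}y_k)$ in the numerator of the $A$-measure $\mu_N^A(y)$ cancel exactly against the denominators of the two copies of \eqref{BtoA} coming from $\chi_{\sigma,\epsilon}$ and its conjugate, leaving only the Sklyanin kernel together with the factors $\Gamma\big({\rm i}(y_k-x_j)\big)$ and $\Gamma\big({\rm i}(x'_j-y_k)\big)$. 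Second, the surviving $y$-integral is a Gustafson first integral of the type \eqref{gustafson-10}, with the $N-1$ external pairs furnished by $x$ and $x'$.

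The genuine difference with Lemma~\ref{lemma:SBA} appears here: there the momenta were inert and the $A$-variables were external, whereas now the $A$-variables are the integration variables and the momentum dependence of \eqref{BtoA} couples to them, entering only through the powers $p^{-{\rm i}Y}$ with $Y=\sum_k y_k$. The two momenta therefore combine into a single $t$-deformation with $t=p/p'$, so that the $y$-integral is of the degenerate Gustafson type \eqref{gustafson-1a} and produces, besides the factor $\prod_{k,j=1}^{N-1}\Gamma\big({\rm i}(x'_k-x_j)+\cdots\big)$ that rebuilds the scalar product of the $\Psi^N_{p,x}$'s, a kernel of the schematic shape $t^{A}(1+t)^{-A-B}$. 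Because the number of integrated $A$-variables exceeds the number of finite external pairs by one, this kernel is singular in $t$: I expect that, regularised by ${\rm Im}\,\sigma>0$ and in the limit $\sigma\to 0$, it collapses to $\delta(p-p')$, in exact analogy with the factor $\cosh^{-2L}\!\big((x_N-x'_N)/2\big)$ that becomes $\delta(x_N-x'_N)$ in the Toda identity \eqref{second-reduced}. Once this delta is extracted, the residual $\Gamma$-structure is precisely the reproducing kernel of the orthogonality relation for the $B$-system, $\delta(p-p')\,\delta^{N-1}(x,x')\big(\mu_{N-1}^B(x)\big)^{-1}$; integrating it against $\varphi(p,x)\big(\varphi(p',x')\big)^{*}$ collapses the primed integrations and leaves $\int_{\mathbb R^+}\!\int_{\mathbb R^{N-1}}|\varphi(p,x)|^2\,\mu_{N-1}^B(x)\,{\rm d}p\,{\rm d}^{N-1}x=\|\varphi\|^2_{\mathbb H_N^B}$.

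The main obstacle is precisely this last step: proving, in the sense of distributions, that the degenerate Gustafson kernel in $t=p/p'$ tends to $\delta(p-p')$ and that the limits $\epsilon\to 0^+$ and $\sigma\to 0$ may be interchanged with the $(p,p',x,x')$-integrations. This is the singular-integral evaluation carried out for the Toda warm-up in Appendix~\ref{Appendix identite fct delat evoluee}, and I would model the argument on it, dominating the regularised integrand by an $\epsilon$-independent integrable majorant — using the compact support of $\varphi$ and the exponential decay of $\mu_{N-1}^B$ — and then testing the limiting distribution against $\varphi$. Granting this analytic point, every algebraic step is a transcription of Lemma~\ref{lemma:SBA}, so the identity $\|\mathrm S_{AB}\varphi\|^2_{\mathbb H_N^A}=\|\varphi\|^2_{\mathbb H_N^B}$ holds; it yields $\ker\mathrm S_{AB}=0$, and since $\mathrm T_N^B$ is assumed unitary this forces $\ker(\mathrm T_N^A)^\star=0$, i.e.\ $\mathcal R(\mathrm T_N^A)=\mathbb H_N$, which is the unitarity of $\mathrm T_N^A$.
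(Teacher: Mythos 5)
Your overall strategy is the right one and matches the paper's up to the point where the real difficulty begins, which you correctly locate (extracting $\delta(p-p')$) but then do not resolve. The gap is concrete: after the spin-dependent $\Gamma$'s cancel, the $y$-integral you propose to do first runs over the $N$ separated variables of the $A$-system but is fed by only $N-1$ external pairs coming from $x$ and $x'$. Gustafson's identity \eqref{gustafson-10} needs $N+1$ pairs for an $N$-fold integral and even its degenerate form \eqref{gustafson-1a} needs $N$; you are one pair short, and the factor $(p/p')^{-{\rm i}Y}$ supplies oscillation, not decay. Correspondingly, the interchange of the $y$- and $(x,x')$-integrations cannot be justified ``exactly as in the isometry proof'': the paper states explicitly that $\chi(y)$ does not decrease fast enough at large $y_k$ for Fubini to apply, precisely because of this deficit. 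The $\sigma$-regularisation you invoke only damps the $p$-integral through ${\rm e}^{-{\rm i}p\bar\sigma}$ and does nothing for the $y$-integral, so it cannot produce the $\delta(p-p')$ you need. (The paper's proof of this lemma in fact uses no $\sigma$ at all, only the $\epsilon$-shift $y\to y+{\rm i}\epsilon$.)

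The missing idea is the cutoff $\chi_L(y)=\chi(y)\,g_L(y)$ with $g_L(y)=\prod_k\left|\Gamma(L+{\rm i}y_k)/\Gamma(L)\right|$. This factor does three jobs at once: it is bounded by $1$ and increases to $1$, so a monotonicity argument identifies $\|\chi\|^2_{\mathbb H_N^A}$ with $\lim_{L\to\infty}I_L$; it decays like ${\rm e}^{-\pi\sum_k|y_k|/2}$, which restores absolute convergence and legitimises Fubini at fixed $L$; and, crucially, $|\Gamma(L+{\rm i}y_k)|^2=\Gamma(L+{\rm i}y_k)\Gamma(L-{\rm i}y_k)$ supplies exactly the missing $N$-th parameter pair $\alpha_N=\beta_N=L$, so that the $y$-integral becomes precisely the degenerate Gustafson integral \eqref{gustafson-1a} with $t=p/p'$. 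Only then does the kernel $\big(1+p/p'\big)^{-L+\cdots}\big(1+p'/p\big)^{-L-\cdots}$ of \eqref{ML} appear, and its concentration at $p=p'$ as $L\to\infty$ (tested against factorised $\varphi(p,x)=f(p)\tilde\varphi(x)$, whose span is dense) is what yields $\int|f(p)|^2\,{\rm d}p$ together with the familiar factor $\prod_{k,j}\Gamma\big({\rm i}(x'_k-x_j)+2\epsilon\big)$ handled as in \eqref{varphi_norm_final}. This is the same mechanism that generates the $\cosh^{-2L}$ kernel in the Toda identity \eqref{second-reduced}, but there too the $L$ arises from an explicit extra parameter pair sent to infinity; your proposal borrows the conclusion of that analogy without reproducing the regularisation that makes it available. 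As written, the argument therefore does not go through.
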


\begin{proof}
Let $\varphi(p,x)$ be a smooth function with compact support having the factorised form 
\begin{align}\label{f-factor-form}
\varphi(p,x)=f(p) \tilde \varphi(x). 
\end{align}
The linear span of these functions is dense in $\mathbb H_N^B$. The action of the operator $\mathrm S_{AB}$ on a func\-tion~$\varphi$ can be
represented as follows
\begin{align}\label{def;SAB}
\chi(y)& =[\mathrm S_{AB}\varphi](y) = \big( \Phi_y, \mathrm T_N^B \varphi \big)_{\mathbb H_N}
=\lim_{\epsilon\to 0^+} \big( \Phi_{y+{\rm i}\epsilon}, \mathrm T_N^B \varphi \big)_{\mathbb H_N} \notag
\\
&=\lim_{\epsilon\to 0^+} \int_{\mathbb R_+} \int_{\mathbb R^{N-1}} \big( \Phi_{y+{\rm i}\epsilon},\Psi^N_{p,{x}} \big)_{\mathbb H_N} \,\varphi(p, x)\,
 \, {\rm d}p\, {\rm d}\mu_{N-1}^B(x),
\end{align}
where $y+{\rm i}\epsilon=(y_1+{\rm i}\epsilon,\dots,y_N+{\rm i}\epsilon)$ and the scalar product of two eigenfunctions is given by~equa\-tion~\eqref{BtoA}. We also
denote the function given by the integral in the above equation by $\chi^\epsilon(y)$, i.e.,
\begin{gather*}
\chi(y)=\lim_{\epsilon\to 0^+}\chi^\epsilon(y).
\end{gather*}
It can be shown, see~\cite[Lemma~3.1]{Kozlowski:2014jka}, that the function $\chi( y)\prod_{i<k} y_{ik} $ is a smooth function. Our final aim is to show that $\chi\in \mathbb H_N^B$ and that the $\|\chi\|^2_{\mathbb H_N^A}=|\varphi\|^2_{\mathbb H_N^B}$.

In contrast to the previous cases, the function $\chi(y)$ does not decrease fast enough for large $ y_k $ to justify changing the order of
integration over $x$, $x'$ and $ y $ in the norm integral. To~overcome this difficulty we proceed as follows. Let us define a regularized
function $\chi_L( y)$ as
\begin{gather*}
\chi_L( y)=\chi(y) g_L( y), \qquad \text{where} \quad
g_L( y)=\prod_{k=1}^N\left|\frac{\Gamma(L+{\rm i}y_k)}{\Gamma(L)}\right|.
\end{gather*}
The factor $g_L(y)$ has the following properties:
\begin{enumerate}[label=($\roman*$)]

\item $g_L( y)<1$ for all $ y$,

\item $g_L(y)\to 1$ monotonically as $L\to\infty$ for fixed $y$,

\item $g_L(y)\sim \exp\big\{-\pi/2\sum_k |y_k|\big\}$ for fixed $L$ and $|y_k|\to\infty$.\label{iii}

\end{enumerate}

It follows from~$(ii)$ that for any bounded region $D\in \mathbb R^{N}$
\begin{gather*}
\int_D |\chi(y)-\chi_L(y)|^2 \,{\rm d}\mu_N^A(y) \to 0 \qquad \text{as} \quad L\to\infty.
\end{gather*}
 Due to~$(iii)$ one concludes that, for finite $L$, the integral of $|\chi_L(y)|^2$ over $\mathbb R^N$ converges
\begin{gather*}
I_L=\int_{\mathbb R^N} |\chi_L(y)|^2\, {\rm d}\mu_N^A(y)<\infty.
\end{gather*}
Then one derives the following inequality
\begin{gather*}
\int_{D} |\chi(y)|^2\, {\rm d}\mu_N^A(y) \leq \int_D |\chi(y)-\chi_L(y)|^2\, {\rm d}\mu_N^A(y)+ \int_D |\chi_L(y)|^2\, {\rm d}\mu_N^A(y)\leq 2^{-M} + I_L,
\end{gather*}
which holds for any $L$ greater than some $L_M$. Since $M$ is arbitrary we get the following inequality
\begin{gather*}
\int_{D} |\chi(y)|^2 \,{\rm d}\mu_N^A(y)\leq \lim_{L\to\infty} I_L\equiv I,
\end{gather*}
which holds for an arbitrary bounded region $D$. Therefore $\int_{\mathbb R^N} |\chi(y)|^2 {\rm d}\mu_N^A(y)\leq I$. Since due to~$(i)$
$I\leq\int_{\mathbb R^N} |\chi(y)|^2 {\rm d}\mu_N^A(y)$ we conclude that
\begin{gather*}
\|\chi\|^2_{\mathbb H_N^A}=\int_{\mathbb R^N} |\chi(y)|^2\, {\rm d}\mu_N^A(y) = I.
\end{gather*}
Thus one has to find the limit of $I_L$ at $L\to\infty$. First we note that $I_L$ can be written in the form
\begin{gather*}
I_L=\int_{\mathbb R^N} |\chi_L(y)|\,{\rm d}\mu_N^A(y)=\int_{\mathbb R^N}\lim_{\epsilon\to 0^+} |\chi_L^\epsilon(y)|^2\,{\rm d}\mu_N^A(y)=
\lim_{\epsilon\to 0^+}\int_{\mathbb R^N}|\chi_L^\epsilon(y)|^2\, {\rm d}\mu_N^A(y),
\end{gather*}
where $\chi_L^\epsilon(y)=g_L(y)\chi^\epsilon(y)$. At the last step, the limit $\epsilon\to 0^+$ is taken after the integration. It~is
possible to do so since the function $\chi_L^\epsilon(y)$ is bounded, $|\chi_L^\epsilon(y)|<C_L$ for all $y$, and the measure~$\mu_N^A(y)$
decays exponentially fast at large $y$,
\begin{gather*}
\mu^A_N(y) \simeq \frac{(4\pi)^{\frac{N(N-1)}2}}{ N!}
\prod_{1\leq i<j\leq N} y_{ij}\sinh\pi y_{ij}\, \prod_{k=1}^N\prod_{j=1}^N y_k^{2\boldsymbol s_j-1}{\rm e}^{-\pi|y_k+\xi_j|},
\end{gather*}
so that the measure of the whole space is finite
\begin{gather*}
\int_{\mathbb R^N}{\rm d}\mu_N^A(y)=2^{-\sum_{i=1}^N (s_i+\bar s_i)}\prod_{k,j=1}^N\Gamma(s_k+\bar s_j).
\end{gather*}
The calculation of the integral of $|\chi_L^\epsilon(y)|^2$ follows the familiar pattern: one substitutes the function $\chi^\epsilon_L(y)$
using~\eqref{def;SAB} and then perform first the integral over $y$. This integral is the reduced version of Gustafson's integral,
equation~\eqref{gustafson-1a}. Making use of this result one can write~$I_L$ in the form
\begin{gather*}
I_L=\lim_{\epsilon\to 0^+}\int_0^\infty \int_0^\infty \int_{\mathbb R^{2N-2}}\widehat \varphi(p,x)\big(\widehat\varphi(p',x')\big)^*
 M_\epsilon(L,p,x,p',x') \,{\rm d}p \,{\rm d}p' \,{\rm d}\mu_{N-1}^B(x)\,{\rm d}\mu_{N-1}^B(x'),
\end{gather*}
where
\begin{gather*}
\widehat\varphi(p,x)=p^{-1/2+{\rm i}\Xi}\,{\varphi(p,x)}\Bigg(\prod_{j=1}^N\prod_{k=1}^{N-1} \Gamma(s_j-{\rm i}x_k)\Bigg)^{-1}
\end{gather*}
and
\begin{align}
M_\epsilon(L,p,x,p',x')&
=\prod_{k,j=1}^{N-1}\Gamma\big({\rm i}(x'_j-x_k)+2\epsilon\big)\,\frac{\Gamma(2L)}{\Gamma^2(L)}\,
\prod_{k=1}^{N-1} \frac{\Gamma(L-{\rm i}x_k+\epsilon) \Gamma(L+{\rm i}x'_k+\epsilon)}{\Gamma^2(L)}\notag
\\
&\quad\times \bigg(1+\frac p{p'}\bigg)^{-L-(N-1)\epsilon+{\rm i}X}
\bigg(1+\frac{p'}p\bigg)^{-L-(N-1)\epsilon-{\rm i}X'}.\label{ML}
\end{align}
We recall that $X(X')=\sum_{k=1}^{N-1}x_k (x'_k)$. Due to our assumptions on the function $\varphi$ the integral in~\eqref{ML} is
restricted to a finite region hence we can expand the function $M_\epsilon(L,p,x,p',x')$ in~series in $L^{-1}$
\begin{align}\label{MLL}
M_\epsilon(L,p,x,p',x') & =L^{{\rm i}(X'-X)}\prod_{k,j=1}^{N-1}\Gamma({\rm i}(x'_k-x_j)+2\epsilon)
\notag\\
&\quad\times
 2^{2L-1}
\sqrt{\frac{L}\pi}\bigg(1+\frac p{p'}\bigg)^{-L+{\rm i}X}
\,\bigg(1+\frac{p'}p\bigg)^{-L-{\rm i}X'} \,\bigg(1+O\bigg(\frac1L\bigg)\bigg),
\end{align}
where we put $\epsilon\to 0$ in non-singular terms. At large $L$, the dominant contribution to the integral over $p$, $p'$ comes from the
region $p=p'$ and can be easily estimated as
\begin{gather*}
2^{{\rm i}(X-X')}\int_0^\infty {\rm d}p \,|f(p)|^2 (1+O(1/ L)),
\end{gather*}
see equation~\eqref{f-factor-form}. The factor in the first line of equation~\eqref{MLL} has the form we encountered earlier, see,
e.g.,~\eqref{varphi_norm_final}, and can be handled in the same way as before, see Appendix~\ref{Appendix identite fct delat evoluee} for
more details. Collecting all factors we obtain
\begin{gather*}
I=\lim_{L\to\infty} I_L=\int_{\mathbb R_+} \int_{\mathbb R^{N-1}} |\varphi(p,x)|^2 \, {\rm d}p\,{\rm d}\mu_{N-1}^B(x)\equiv \|\varphi\|^2_{\mathbb H_{N}^B}.
\end{gather*}
Thus one concludes that $\mathrm S_{AB}$ is a norm preserving map, $\|S_{AB}\varphi\|^2_{\mathbb H_N^A}\equiv\|\chi\|^2_{\mathbb H_N^A}=
\|\varphi\|^2_{\mathbb H_N^B}$ and, hence, $\ker \mathrm S_{AB}=0$. Therefore one concludes that $\mathrm T_{N}^A$ is a unitary operator
between the Hilbert spaces $\mathbb H_{N}^A$ and $\mathbb H_N$.
\end{proof}

Lemma~\ref{Lemme isometrie operateur SAB} completes the inductive proof that the maps $\mathrm T_N^B$, $\mathrm T_N^A$ and $\mathbb
T_N^{\mathbb B}$ are unitary for all~$N$.

\section{Summary}

This work devised a very effective inductive scheme allowing one to prove the completeness of~Sklyanin's separated variables which arise
in the analysis of the closed and open $\mathrm{SL}(2,\mathbb R)$ spin chain magnets. The method we proposed heavily relies on the use of
multidimensional Mellin--Barnes integrals which were calculated in closed form by R.A.~Gustafson~\cite{MR1139492}. The attractive feature
of our approach is that it does not depends on the details of the spin chain~-- spins,~$s_k$, and inhomogeneity parameters,~$\xi_k$.
Moreover, the core identities which are to be used for the closed spin chain or for Toda chain are exactly the same, what stressed a
certain generality of our method. Since the Gustafson integrals can be viewed as a special case of the elliptic hypergeometric integrals,
see, e.g.,~\cite{MR2630038,MR2479997},
 we believe that our method can be adapted to such models as spin chains with the trigonometric and elliptic
$R$-matrices or non-compact magnets with the $\mathrm{SL}(2,\mathbb C)$ symmetry group.

\appendix

\section[Some representations for multi-dimensional Dirac delta-functions]
{Some representations for multi-dimensional Dirac $\boldsymbol{\delta}$-functions}
\label{Appendix identite fct delat evoluee}

{\bf I.} Define
\begin{gather*}
\widetilde C_N^{( \epsilon, \epsilon')}(p, x, x')=
\frac{\Gamma\big(
\epsilon_N+\epsilon'_N+{\rm i}\sum_{a=1}^{N-1}(x_a-x'_a)\big)}{\Gamma\big(\sum_{a=1}^N\epsilon_a+\epsilon'_a\big) }
 \frac{ \prod_{ a, b }^{ N-1 }\Gamma({\rm i}(x'_b-x_a)+\epsilon'_b+\epsilon_a)}{ \prod_{ a \not= b }^{ N-1 }
\Gamma( {\rm i} (x_{a}^{\prime}-x_{b}^{\prime})) \Gamma( {\rm i} (x_{a}-x_{b})) }.
\end{gather*}
In the following we show that, in the sense of distributions, it holds
\begin{gather*}
\lim_{\epsilon, \epsilon^{\prime} \rightarrow 0^+ } \big\{ \widetilde{C}_N^{(\epsilon;\epsilon^{\prime})}\big(p,x,x^{\prime} \big) \big\}
= W_{N-1}(x) \cdot \delta^{N-1}\big(x,x^{\prime} \big),
\end{gather*}
where
\begin{gather}\label{W-N-def}
W_{N-1}(x) = (2\pi)^{N-1} (N-1)! \prod\limits_{a\not=b}^{N-1} \frac{1}{ \Gamma\big({\rm i} (x_{a}-x_{b}) \big) }.
\end{gather}
In other words, given
\begin{gather}
\mathscr{I}_N^{(\epsilon;\epsilon^{\prime})} = \int\limits_{\mathbb{R}^+}{} {\rm d}p \int\limits_{\mathbb{R}^{N-1}}{}
{\rm d}^{N-1}x \int\limits_{\mathbb{R}^{N-1}}{} {\rm d}^{N-1}x^{\prime}
\widetilde{C}_N^{(\epsilon;\epsilon^{\prime})}\big(p,x,x^{\prime} \big)
\varphi\big(p,x\big)\varphi^{*}\big(p,x^{\prime}\big)
\label{equation definition msc IN}
\end{gather}
it holds that
\begin{gather*}
\lim_{\epsilon, \epsilon^{\prime} \rightarrow 0^+ } \mathscr{I}_N^{(\epsilon;\epsilon^{\prime})}
=
 \int\limits_{\mathbb{R}^+}{} {\rm d}p \int\limits_{\mathbb{R}^{N-1}}{} {\rm d}^{N-1}x\, W(x) \big| \,\varphi\big(p,x\big) \big|^2 .
\end{gather*}

In order to establish the result, one starts by reorganising the integral in \eqref{equation definition msc IN} as
\begin{gather*}
\mathscr{I}_N^{(\epsilon;\epsilon^{\prime})} = \int\limits_{\mathbb{R}^+}{} {\rm d}p \int\limits_{\mathbb{R}^{N-1}}{}
{\rm d}^{N-1}x \int\limits_{\mathbb{R}^{N-1}}{} {\rm d}^{N-1}x^{\prime}
\mathscr{U}_N^{(\epsilon;\epsilon^{\prime})}\big(p,x,x^{\prime} \big)
 \\ \hphantom{\mathscr{I}_N^{(\epsilon;\epsilon^{\prime})} =}
\times \det_{N-1}\bigg[ \frac{1}{x^{\prime}_k-x_j-{\rm i} (\epsilon_j+\epsilon_k^{\prime}) } \bigg] \,
\frac{-{\rm i} \sum_{k=1}^{N}(\epsilon_k+\epsilon_k^{\prime}) }{-{\rm i}(\epsilon_N+\epsilon_N^{\prime}) + \sum_{k=1}^{N-1}(x_k - x_k^{\prime}) },
\end{gather*}
where
\begin{gather*}
\mathscr{U}_N^{(\epsilon;\epsilon^{\prime})}\big(p,x,x^{\prime} \big) =
 \prod\limits_{a<b}^{N-1} \big\{ (x_{a}-x_{b})\, (x_{b}^{\prime}-x_{a}^{\prime}) \big\} \,
\widehat{C}_N^{(\epsilon;\epsilon^{\prime})}\big(p,x,x^{\prime} \big) \, \varphi\big(p,x\big) \, \varphi^{*}\big(p,x^{\prime}\big) ,
\end{gather*}
and
\begin{gather*}
\widehat{C}_N^{(\epsilon;\epsilon^{\prime})}\big(p,x,x^{\prime} \big) \! =\! (-{\rm i})^{(N-1)^2}
\frac{ \Gamma\big(\! 1\!+\! \epsilon_N\!+\!\epsilon^{\prime}_N\! +\!\sum_{k=1}^{N-1}(x_k \!-\! x_k^{\prime})\! \big) \!\prod_{ a,b=1 }^{ N-1 }\!
 \Gamma\big(1\!+\!{\rm i}(x_a^{\prime}\!-\!x_b)\!+\!\epsilon'_a\!+\!\epsilon_b \big) }{
 \Gamma\big( 1\!+\! \sum_{k=1}^{N}(\epsilon_k\! +\! \epsilon_k^{\prime}) \big) \prod_{ a \not= b }^{ N-1 }
 \Gamma\big(1\!+\!{\rm i} (x_{a}^{\prime}\!-\!x_{b}^{\prime})\big) \Gamma\big( 1\!+\!{\rm i} (x_{a}\!-\!x_{b}) \big) }.
\end{gather*}
Thus, $\mathscr{U}_N^{(\epsilon;\epsilon^{\prime})}$ is antisymmetric in $x$, $x^{\prime}$ taken singly, and it is smooth and compactly
supported in $x, x^{\prime} \in \mathbb{R}^{N-1}$ and smooth in a small neighbourhood of zero in respect to $\epsilon$, $\epsilon^{\prime}$.

Expanding the determinant as a sum over the permutation group and using the antisymmetry in $x$, $x^{\prime}$ and the smoothness in
$\epsilon$, $\epsilon^{\prime}$ of $\mathscr{U}_N^{(\epsilon;\epsilon^{\prime})}$ as well as the Sokhotsky--Plemejl formulae for the limits
$\epsilon, \epsilon^{\prime} \rightarrow 0^+$ of the singular factors, one gets that
\begin{gather*}
\lim_{ \epsilon,\epsilon^{\prime}\rightarrow 0^+ } \mathscr{I}_N^{(\epsilon;\epsilon^{\prime})} =
 (N-1)! \lim_{\varepsilon\rightarrow 0^+} \lim_{ \epsilon \rightarrow 0^+ }
\int\limits_{\mathbb{R}^+}{} {\rm d}p \int\limits_{\mathbb{R}^{N-1}}{} {\rm d}^{N-1}x
\int\limits_{\mathbb{R}^{N-1}}{} {\rm d}^{N-1}x^{\prime} \mathscr{U}_N^{(0;0)}\big(p,x,x^{\prime} \big)
\\ \hphantom{\lim_{ \epsilon,\epsilon^{\prime}\rightarrow 0^+ } \mathscr{I}_N^{(\epsilon;\epsilon^{\prime})} =}
\times \prod\limits_{a=1}^{N-1} \frac{ 1 }{ x^{\prime}_a - x_a -{\rm i} \epsilon_a } \,\frac{-{\rm i} \varepsilon }{
 -{\rm i} \varepsilon + \sum_{k=1}^{N-1}(x_k - x_k^{\prime}) }.
\end{gather*}
It is thus enough to study the $\varepsilon\rightarrow 0^+$, $\epsilon \rightarrow 0^+ $ limit of the model integral
\begin{gather*}
\mathscr{J}_N^{(\varepsilon;\epsilon)} = \int\limits_{\mathbb{R}^{N-1}}{} {\rm d}^{N-1}x
\int\limits_{\mathbb{R}^{N-1}}{} {\rm d}^{N-1}x^{\prime} \chi\big(x,x^{\prime} \big)
 \prod\limits_{a=1}^{N-1} \frac{1}{x^{\prime}_a-x_a-{\rm i} \epsilon_a }
\, \frac{-{\rm i} \varepsilon }{-{\rm i} \varepsilon + \sum_{k=1}^{N-1}(x_k - x_k^{\prime}) },
\end{gather*}
in which $\chi$ is antisymmetric in $x$, $x^{\prime}$ taken singly. Observe that, for fixed $x$, by the Stone--Weierstrass theorem, there
exists a sequence of smooth, compactly supported functions $\varphi_{k,a}$ on~$\mathbb{R}$ such that
\begin{gather*}
 \chi\big(x,x^{\prime} \big) \; = \; \sum_{k\geq 1 }{} \prod\limits_{a=1}^{N-1}\varphi_{k,a}(x_a^{\prime}).
\end{gather*}
Next, one observes that
\begin{gather*}
\prod\limits_{a=1}^{N-1} \frac{ \varphi_{k,a}(x_a^{\prime}) }{x^{\prime}_a-x_a-{\rm i} \epsilon_a }
 = \sum_{s=0}^{N-1} \sum\limits_{ \substack{ \alpha_+\cup \alpha_- = [\![1;N-1]\!] \\ |\alpha_+|=s } }{}
\prod\limits_{a \in \alpha_+}{} \Delta_{x_a,x_a^{\prime}; \epsilon_a}\big[ \varphi_{k,a}\big] \prod\limits_{a \in \alpha_-}{}
\frac{ \varphi_{k,a}(x_a ) }{x^{\prime}_a-x_a-{\rm i} \epsilon_a },
\end{gather*}
where
\begin{gather*}
\Delta_{x_a,x_a^{\prime}; \epsilon_a}\big[ f\big] = \frac{ f(x_a^{\prime})-f(x_a ) }{x^{\prime}_a-x_a-{\rm i} \epsilon_a } .
\end{gather*}
Thus, inserting the expansion in the integral, summing up, setting $\epsilon_a=0$ in the regular part of the integrand and using the
antisymmetry in $x$, $x^{\prime}$ of $\chi$, one gets that
\begin{gather}
 \lim_{\varepsilon\rightarrow 0^+} \lim_{ \epsilon \rightarrow 0^+ } \mathscr{J}_N^{(\varepsilon;\epsilon)} =
 \lim_{\varepsilon\rightarrow 0^+} \lim_{\epsilon \rightarrow 0^+ }
\sum_{s=0}^{N-1} C^{s}_{N-1}
 \int\limits_{\mathbb{R}^{N-1}}{} {\rm d}^{N-1}x \int\limits_{\mathbb{R}^{N-1}}{} {\rm d}^{N-1}x^{\prime} \Delta^{(s)}_{x,x^{\prime} } \,
\chi\big( x,(\boldsymbol{x}_s^{\prime},x^{(s+1)})\big) \nonumber
\\ \hphantom{ \lim_{\varepsilon\rightarrow 0^+} \lim_{ \epsilon \rightarrow 0^+ } \mathscr{J}_N^{(\varepsilon;\epsilon)} =}
\times \prod\limits_{a=s+1}^{N-1} \frac{1}{x^{\prime}_a-x_a-{\rm i} \varepsilon_a }
\, \frac{-{\rm i} \varepsilon }{-{\rm i} \varepsilon + \sum_{k=1}^{N-1}(x_k - x_k^{\prime}) },
\label{ecriture integrale modele transformee}
\end{gather}
where $\Delta^{(s)}_{x,x^{\prime} }$ is a composite of operators acting on the variables $x_1,\dots, x_s, x_1^{\prime},\dots,
x_{s}^{\prime}$
\begin{gather*}
\Delta^{(s)}_{x,x^{\prime} } = \prod\limits_{a=1}^{s} \Delta_{x_a,x_a^{\prime}}^{(0)} .
\end{gather*}
Also, establishing~\eqref{ecriture integrale modele transformee}, we took the freedom to relabeling the variables 
\begin{gather*}
x^{(k)} = (x_k,\dots, x_{N-1 } ) \qquad \text{and} \qquad \boldsymbol{x}_s^{\prime} =(x_1^{\prime},\dots, x_{s}^{\prime}).
\end{gather*}
One may now readily take the integrals in respect to $x^{\prime}_{s+1},\dots, x^{\prime}_{N-1}$ in \eqref{ecriture integrale modele
transformee}, what yields
\begin{gather*}
\lim_{\varepsilon\rightarrow 0^+} \lim_{\epsilon \rightarrow 0^+ }
 \mathscr{J}_N^{(\varepsilon;\epsilon)} = \lim_{\varepsilon\rightarrow 0^+}
\sum_{s=0}^{N-1} C^{s}_{N-1} (2 {\rm i}\pi)^{N-1-s}
\int\limits_{\mathbb{R}^{N-1}}{} {\rm d}^{N-1}x \int\limits_{\mathbb{R}^{s}}{} {\rm d}^{s}x^{\prime}
\\ \hphantom{\lim_{\varepsilon\rightarrow 0^+} \lim_{\epsilon \rightarrow 0^+ }
 \mathscr{J}_N^{(\varepsilon;\epsilon)} =}
\times \Delta^{(s)}_{x,x^{\prime} } \, \chi\big( x,\big(\boldsymbol{x}_s^{\prime},x^{(s+1)}\big)\big)
\, \frac{-{\rm i} \varepsilon }{-{\rm i} \varepsilon + \sum_{k=1}^{s}(x_k - x_k^{\prime}) }.
\end{gather*}
Apart from the term arising in the last line, the integrand is a smooth function. Thus, by changing the variables
\begin{gather*}
x\hookrightarrow y=\bigg( \sum_{k=1}^{s}x_k ,x_2,\dots, x_{N-1} \bigg) \qquad \text{and} \qquad
x^{\prime}\hookrightarrow y^{\prime}=\bigg( \sum_{k=1}^{s} x_k^{\prime},x_2^{\prime},\dots, x_{s}^{\prime} \bigg)
\end{gather*}
one may apply the Sokhotsky--Plemejl formula for the remaining singular factor what ensures that solely the $s=0$ term contributes to the
integral. Hence,{\samepage
\begin{gather*}
 \lim_{\varepsilon\rightarrow 0^+} \lim_{\epsilon \rightarrow 0^+ } \mathscr{J}_N^{(\varepsilon;\epsilon)} = (2 {\rm i}\pi)^{N-1}
 \int\limits_{\mathbb{R}^{N-1}}{} {\rm d}^{N-1}x \chi\big( x, x\big) .
\end{gather*}
This entails the claim.}

\noindent
{\bf II.} Let us define
\begin{gather*}
S_N^{L,\varepsilon}(x,x')=L^{{\rm i}\sum_{a=1}^{N}(x'_a-x_a)}\frac{\prod_{a,b=1}^{N} \Gamma({\rm i}(x'_a-x_b)+\epsilon)}{\prod_{ a \not= b }^{ N }
\Gamma\big({\rm i} (x_{a}^{\prime}-x_{b}^{\prime}) \big) \Gamma\big({\rm i} (x_{a}-x_{b}) \big)}.
\end{gather*}
We will show that in the sense of distributions the following identity holds
\begin{gather*}
\lim_{L\rightarrow\infty}\lim_{\epsilon\rightarrow 0^+}S_N^{L,\epsilon}(x,x')
 =W_N(x) \, \delta^{N}\big(x,x^{\prime} \big),
\end{gather*}
where $W_N$ is defined in equation~\eqref{W-N-def}. Namely, given
\begin{gather}
\mathscr{T}_N^{L,\epsilon} = \int\limits_{\mathbb{R}^{N}}{\rm d}^{N}x
\int\limits_{\mathbb{R}^{N}}{} {\rm d}^{N}x^{\prime}
{S}_N^{L,\epsilon}\big(x,x^{\prime} \big)
\varphi\big(x\big)\varphi^{*}\big(x^{\prime}\big)
\label{SLepsilonint}
\end{gather}
it holds that
\begin{gather*}
\lim_{L\rightarrow\infty}\lim_{\epsilon \rightarrow 0^+ } \mathscr{T}_N^{L,\epsilon}   =
 \int\limits_{\mathbb{R}^{N}}{} {\rm d}^{N}x W_N(x) \big| \varphi\big(x\big) \big|^2 .
\end{gather*}
Repeating the same arguments as above one can rewrite \eqref{SLepsilonint} in the form
\begin{gather*}
\lim_{L\rightarrow\infty}\lim_{ \epsilon\rightarrow 0^+ } \mathscr{T}_N^{L,\epsilon} =
 N! \lim_{L\rightarrow\infty}\lim_{ \epsilon\rightarrow 0^+ }
 \int\limits_{\mathbb{R}^{N}}{} {\rm d}^{N}x
\int\limits_{\mathbb{R}^{N}}{} {\rm d}^{N}x^{\prime} \mathscr{V}_N \big(x,x^{\prime} \big)
\prod\limits_{a=1}^{N} \frac{ L^{{\rm i}(x'_a-x_a)} }{ x^{\prime}_a - x_a -{\rm i} \epsilon },
\end{gather*}
where
\begin{gather*}
\mathscr{V}_N\big(x,x^{\prime} \big) =(-{\rm i})^{N^2}
 \frac{ \prod_{a<b}^{N} (x_{a}-x_{b})\, (x_{b}^{\prime}-x_{a}^{\prime}) 
 \prod_{a,b=1}^{N}\Gamma\big(1+{\rm i}(x'_a-x_b)\big)}{\prod_{ a \not= b }^{ N }
\Gamma\big( 1+{\rm i} (x_{a}^{\prime}-x_{b}^{\prime}) \big) \Gamma\big( 1+{\rm i} (x_{a}-x_{b}) \big)}\,
\varphi\big(p,x\big) \, \varphi^{*}\big(p,x^{\prime}\big).
\end{gather*}
Finally, taking into account that
\begin{align*}
\lim_{\epsilon\rightarrow 0^+}\int_{\mathbb R} {\rm d}x\, L^{{\rm i}x}\frac{\varphi(x)}{x-{\rm i}\epsilon} & =
\lim_{\epsilon\rightarrow 0^+}\varphi(0)\int_{\mathbb R} {\rm d}x \frac{L^{{\rm i}x}}{x-{\rm i}\epsilon}
 + \int_{\mathbb R} {\rm d}x\, L^{{\rm i}x}\frac{\varphi(x)-\varphi(0)}{x}\notag
\\[2mm]
&=2\pi\varphi(0) + O(1/\ln L)
\end{align*}
and using the Stone--Weierstrass theorem one gets the necessary result.

\subsection*{Acknowledgements}

This work was supported by the Russian Science Foundation project No 19-11-00131 and by the DFG grants MO 1801/4-1, KN 365/13-1 (A.M.).
The work of K.K.K.\ is supported by CNRS.

\pdfbookmark[1]{References}{ref}
\LastPageEnding

\end{document}